\pdfoutput=1  
\documentclass[11pt]{article}

\usepackage[utf8]{inputenc}
\usepackage[T1]{fontenc}
\usepackage[margin=1in]{geometry}
\usepackage{amsmath,amssymb,amsthm}
\usepackage{mathtools}
\usepackage{booktabs}
\usepackage{enumitem}
\usepackage{titlesec}
\usepackage{graphicx}
\usepackage{float}
\usepackage[most]{tcolorbox}
\usepackage[hidelinks]{hyperref}
\usepackage{xcolor}

\graphicspath{{figures/}}
\definecolor{linkblue}{HTML}{2E75B6}
\hypersetup{colorlinks=true, linkcolor=linkblue, urlcolor=linkblue, citecolor=linkblue}

\titleformat{\section}{\normalfont\large\bfseries}{\thesection}{0.7em}{}
\titleformat{\subsection}{\normalfont\normalsize\bfseries}{\thesubsection}{0.6em}{}

\theoremstyle{plain}
\newtheorem{theorem}{Theorem}
\newtheorem{proposition}{Proposition}

\theoremstyle{definition}
\newtheorem{definition}{Definition}
\newtheorem{assumption}{Assumption}
\theoremstyle{remark}
\newtheorem{remark}{Remark}

\newcommand{\btok}{b_{\mathrm{tok}}}
\newcommand{\Carbon}{\mathrm{Carbon}}

\title{\textbf{Green SARC: Predictive Cost and Carbon Governance for Agentic AI Systems}}
\author{Gaston Besanson\thanks{Universidad Torcuato Di Tella.}}
\date{Preprint, June 2026}

\begin{document}
\maketitle

\begin{abstract}
\noindent Agentic AI systems act through tools and sub-agents, yet the controls meant to bound their financial and environmental cost still sit on dashboards evaluated beside or after execution. Green SARC applies the SARC governance-by-architecture framework --- four enforcement sites in the agent loop --- to FinOps and GreenOps, contributing the theory of \emph{what} to enforce and how to predict it. We report four policy-independent results. (i)~The unconstrained ``State Snowball'' is $\Theta(n^2)$ in loop depth; on $3{,}000$ real multi-step plans (SWE-rebench) it holds on $100\%$, with median curvature $\hat c_2=216$ \emph{exceeding} the linear-accretion prediction $p/2=134$ --- real plans accrete \emph{faster} than the model (\S\ref{sec:multistep}). (ii)~On real residuals the Normal-$\sigma$ gate under-covers ($92\%$ at nominal $95\%$); split-conformal calibration holds ($95.2\%$; Theorem~\ref{thm:safety}). (iii)~A soft Lagrangian penalty tuned to the budget in expectation breaches it on $91.5\%$ of seeds; the architectural gate breaches $0\%$. (iv)~Under binding budgets the gate's over-budget incidence is $0\%$ on synthetic and real (BurstGPT) arrivals. End-to-end token/USD/carbon savings ($47$--$55\%$) are real but policy-dependent in magnitude --- set by a scope-cap knob, not by gate rejections. The library is open-source, dependency-free, and ships a regeneration script for every cited number.
\end{abstract}

\textbf{Keywords:} agentic AI, governance-by-architecture, predictive FinOps, GreenOps, token economics, conformal prediction, runtime constraints, SARC.

\begin{tcolorbox}[colback=linkblue!4,colframe=linkblue!55,title=\textbf{Reproducibility},fonttitle=\bfseries,boxrule=0.5pt]
\small Every quantitative claim in this paper is computed by code in the companion repository, not hand-entered. The committed data assets are produced by \texttt{make paper-data}; the figures and \texttt{figure\_stats.json} (the source of every number quoted below) by \texttt{make paper-figures}; \texttt{check\_stats.py} verifies that each statistic in the text resolves to that file. The synthetic benchmark exercises the \emph{real} gate, estimator, budget, and circuit-breaker code paths and is deterministic per seed; the real-trace study (\S\ref{sec:realtrace}) replays a public dataset using token counts only. Code, data, and build: \url{https://github.com/besanson/Greensarc}.
\end{tcolorbox}

\begin{tcolorbox}[colback=black!3,colframe=black!45,title=\textbf{Status},fonttitle=\bfseries,boxrule=0.5pt]
\small \textbf{Green SARC is alpha (Phase 1).} The library ships step-level governance with a Normal-$\sigma$ Pre-Action Gate, four-site SARC enforcement, three transport adapters (MCP, PAIS sidecar, OTel SpanProcessor), single-process \texttt{Budget} (thread-safe via \texttt{threading.Lock}), and SQLite/JSONL audit stores. The conformal calibration of \S\ref{sec:conformal} and Theorem~\ref{thm:anytime} are paper-side analyses of the same forecast; the runtime gate ships the Normal-$\sigma$ bound and is upgraded to conformal as a Phase 2 work item. Phase 2 (trajectory-level rejection, multi-tenant distributed \texttt{Budget}, latency-headroom enforcement, adaptive conformal at runtime) is on the roadmap; see \S\ref{sec:limits}.
\end{tcolorbox}

\section{Introduction}

The cost center of artificial intelligence has shifted from training, whose resource envelope is fixed at design time, to the \emph{inference trajectory}: the runtime-determined sequence of model calls, tool invocations, and conditional retries an agent emits while pursuing a goal. A classical inference call has a bounded, predictable cost. An agentic workflow has neither: the same task, executed twice, can differ by an order of magnitude in token consumption. Both the API bill and the energy draw are therefore \emph{stochastic} quantities governed by the execution trace, not the specification.

Two instruments are commonly deployed against this volatility. Post-hoc auditing reconciles spend after the billing period closes. Policy-as-code encodes budget rules in a layer evaluated alongside, but not inside, the agent loop. Both inherit the defect SARC identified for correctness obligations: they evaluate constraints after, or beside, the execution they are meant to bound. A budget breach detected at month-end cannot un-spend the tokens; a carbon overage logged to a dashboard cannot un-emit the carbon.

\paragraph{Relationship to SARC.}
SARC~\cite{sarc2026} is a governance-by-architecture framework that treats constraints as first-class specification objects and compiles them into four enforcement sites: a Pre-Action Gate, an Action-Time Monitor, a Post-Action Auditor, and an Escalation Router. Green SARC is an \emph{application} of that architecture --- we reuse the four sites unchanged --- but carries its own theory, orthogonal to SARC's correctness results. SARC governs whether the system is \emph{right}; Green SARC governs what the system \emph{costs}. The two are independent axes that happen to share enforcement sites.

\paragraph{Contributions.}
\begin{enumerate}[leftmargin=1.5em,itemsep=2pt]
\item \textbf{State-Snowball theorem, formal and empirical (\S\ref{sec:snowball}).} Naive context accretion yields $\Theta(n^2)$ cumulative prompt cost (Theorem~\ref{thm:quad}); the synthetic fit recovers the closed-form coefficient $p/2$ exactly, and on real ShareGPT traffic the cumulative-prompt curvature is \emph{negative} --- the snowball is an artifact of naive orchestration, not of chat itself (\S\ref{sec:snowball},~\S\ref{sec:realtrace}).
\item \textbf{Predictive Pre-Action Gate with calibration and an anytime-valid safety bound (\S\ref{sec:predictive},~\S\ref{sec:conformal}).} We generalize the gate to a learned forecast (of which rule-based accounting is the zero-information limit), give split-conformal marginal safety (Theorem~\ref{thm:safety}), and an anytime-valid trajectory over-spend bound via Ville's inequality (Theorem~\ref{thm:anytime}).
\item \textbf{Binding-budget gate evaluation on the Pareto frontier (\S\ref{sec:binding}).} Across a budget grid the gate's empirical over-budget incidence stays at or below $\delta$ while completing more work at zero overspend --- dominating the soft-penalty frontier.
\item \textbf{Real-trace coverage validation on ShareGPT (\S\ref{sec:realtrace}).} On real, non-Gaussian residuals the Normal-$\sigma$ gate under-covers at tight $\delta$ while split conformal holds nominal coverage; adaptive conformal restores coverage under distribution shift.
\item \textbf{Ablation with paired-bootstrap CIs (\S\ref{sec:eval}).} A four-condition ablation decomposes the saving by lever (scope, routing, breaker), each with a $95\%$ CI.
\item \textbf{Real-arrival ablation on a public production trace (\S\ref{sec:realarrival}).} The four-condition ablation re-run on the BurstGPT trace of real Azure OpenAI traffic reproduces the synthetic savings ordering under real burstiness and prompt/response distributions, with paired-bootstrap CIs.
\item \textbf{Open-source library and reproducible benchmark (\S\ref{sec:sites}).} A dependency-free implementation with a regression contract enforced in CI via \texttt{make verify}.
\end{enumerate}
We also relate the gate to Bandits-with-Knapsacks as a feasibility oracle (Proposition~\ref{prop:cbwk}). The decoupling of cost from correctness governance (\S\ref{sec:decouple}) frames the artifact's scope.

In one line: \emph{SARC gives us where to enforce; we contribute what to enforce, why it decouples from safety, how to predict it, and the evidence that the prediction is calibrated and the architecture is necessary.}

\section{Background and Related Work}\label{sec:related}

\paragraph{SARC.} A SARC specification declares, per constraint, its source, class, predicate, verification point, response protocol, and operating point, and compiles these into the four enforcement sites named above. It formalizes the minimal invariants for specification--trace correspondence and argues that finite reward penalties do not in general substitute for hard runtime constraints --- a claim we make quantitative for the cost domain in \S\ref{sec:penalty}.

\paragraph{FinOps and GreenOps.} FinOps brings financial accountability to variable cloud spend; GreenOps extends the discipline to carbon and energy. The financial and environmental cost of large-scale AI compute has been quantified for the training regime~\cite{strubell2019,patterson2021,lacoste2019}; the inference regime studied here shifts this cost into a runtime-variable, per-trajectory quantity. Both disciplines are predominantly practiced as observe-and-reconcile loops~\cite{finops}, a cadence adequate only when the consumption unit is predictable. Agentic inference violates that premise.

\paragraph{Efficient inference systems.} A large systems literature reduces the \emph{unit} cost of inference --- phase-split serving~\cite{splitwise}, high-throughput single-GPU offloading~\cite{flexgen}, and statistical multiplexing across models~\cite{alpaserve}. These optimize how a fixed set of calls is served; Green SARC is complementary and orthogonal: it governs \emph{which} calls an agent is permitted to make, given a budget, before they are issued. The two compose cleanly. If a serving optimization reduces the effective per-token cost from $c$ to $c(1-\rho)$ for some efficiency $\rho\in[0,1)$, then a fixed token budget $\btok$ admits $1/(1-\rho)$ times as much work, since the gate's feasibility test $\hat c\,(1-\rho)\le \btok$ is equivalent to $\hat c \le \btok/(1-\rho)$; the carbon ceiling scales identically through the proportional energy saving. Cheaper serving thus relaxes the gate's effective budget by a known factor rather than changing its mechanism.

\paragraph{Cost-aware routing and cascades.} A complementary line reduces cost by \emph{choosing which model} answers a query: FrugalGPT learns an LLM cascade that escalates to a stronger model only when a cheaper one is judged inadequate~\cite{frugalgpt}, and RouteLLM trains a binary router between a strong and a weak model from preference data~\cite{routellm}. These optimize the per-query model choice to maximize quality at lower expected cost, but they offer no hard guarantee: a router tuned to spend less \emph{in expectation} can still overrun any fixed budget on an adversarial or heavy-tailed query stream, exactly the soft-constraint failure we quantify in \S\ref{sec:penalty}. Green SARC is orthogonal and composable: it is the enforcement contract a router runs \emph{inside}, supplying the per-action feasibility test that turns an expected-cost heuristic into a budget-safe one (the CBwK feasibility-oracle framing of Proposition~\ref{prop:cbwk}). We concede the overlap honestly: Green SARC's energy-aware \emph{routing} lever (\S\ref{sec:eval}) is mechanistically the same idea as FrugalGPT's cascade --- down-route when a cheaper model suffices --- and our contribution there is not the router but the gate that bounds it.

\paragraph{Conformal prediction.} Our safety guarantee rests on split (inductive) conformal prediction, which converts any point predictor into a set/interval with distribution-free, finite-sample marginal coverage under exchangeability~\cite{vovk2005,angelopoulos2023}. Where residuals are non-exchangeable (distribution shift), adaptive conformal inference restores coverage online~\cite{gibbs2021}; we flag this as the path to robustness on real traces (\S\ref{sec:limits}).

\paragraph{Constrained decision-making.} Admitting actions under a depletable budget is formally a Bandits-with-Knapsacks / constrained-MDP problem~\cite{cbwk}. Green SARC does not solve the optimal-policy problem; it provides the \emph{enforcement} primitive (a calibrated per-action feasibility test) that any such policy needs at runtime, and contrasts it with the soft-penalty (Lagrangian) relaxation in \S\ref{sec:penalty}. We make the relationship precise in Proposition~\ref{prop:cbwk}: the gate composes with any sublinear-regret CBwK policy as a feasibility oracle without changing its regret order.

\paragraph{Architectural lineage.} The four-site, enforce-in-the-loop design has ancestry well beyond the author's own SARC framework. The \emph{reference monitor} of Anderson's 1972 security study --- a mediation mechanism that must be invoked on every access, be tamper-proof, and be small enough to verify~\cite{anderson1972} --- is the direct conceptual ancestor of the Pre-Action Gate: a non-bypassable check interposed before each consequential operation. Admission control in networking and queueing systems (admit a flow only if its reserved rate fits the remaining capacity) is the same two-phase reserve-then-commit primitive our Budget implements. And runtime verification --- synthesizing monitors that check an execution against a specification as it runs --- is the correctness-domain analogue of the Action-Time Monitor and Post-Action Auditor. Green SARC's novelty is not the enforce-in-the-loop stance itself but its application to \emph{predicted} cost and carbon, with a calibrated forecast standing in for the boolean access check.

\paragraph{What Green SARC is not.} Pure observability tools (LangSmith, Helicone, raw OpenTelemetry) give post-hoc cost without enforcement: they tell you what was spent, after it was spent. API-level rate limits (provider tier limits, sidecar throttling on request counts) enforce \emph{request counts}, not predicted cost or carbon, so a single expensive call passes unchecked. In-agent budget tracking (framework callbacks) is in-process and bookkeeping-only, with no cross-process attribution and no enforcement contract spanning the four sites. Green SARC differs by being a four-site governance contract whose gate \emph{predicts} cost before the action fires; the closest comparable systems govern only post hoc, or only on request counts.

\paragraph{Regulatory context.} Enterprise deployments increasingly must keep auditable records of automated decisions and account for system accuracy and the energy footprint of AI. The EU AI Act mandates automatic record-keeping/logging (Art.~12), transparency and information provision (Art.~13), and accuracy/robustness with documented metrics (Art.~15)~\cite{aiact}; the Corporate Sustainability Reporting Directive (CSRD) extends mandatory sustainability disclosure to in-scope undertakings~\cite{csrd}. Green SARC's Post-Action Auditor produces, as a byproduct of execution, the attribution-preserving, predicted-vs-actual trace these regimes require, extended to per-trajectory token yield and a carbon proxy.

\section{Decoupling FinOps Governance from Correctness Governance}\label{sec:decouple}

We state the decoupling explicitly because it defines the scope of the artifact.

\begin{proposition}[Independence of axes]
Let a governance layer be characterized by the predicate class it enforces. SARC's correctness layer enforces predicates over \emph{action validity} (is this action safe and permitted?). The Green SARC layer enforces predicates over \emph{resource consumption} (does this action fit the cost and carbon budget?). The two predicate classes share enforcement sites but neither implies the other: a perfectly safe agent can be ruinously expensive, and a perfectly cheap agent can be unsafe.
\end{proposition}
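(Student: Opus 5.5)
The plan is to formalize the two predicate classes on a common abstract model and then refute each implication with an explicit witness. We fix a minimal agent model: a trajectory is a finite sequence of actions $a_1,\dots,a_n$. Each action carries a validity label $V(a_t)\in\{0,1\}$ (the SARC correctness predicate: safe and permitted) and a nonnegative resource cost $r(a_t)$ (tokens, USD, or carbon proxy). A budget $\btok$ induces the Green SARC predicate $R(a_{1:t}) = \mathbf{1}\{\sum_{s\le t} r(a_s)\le \btok\}$. Both predicates are evaluated at the same four sites, so ``sharing enforcement sites'' is simply the statement that both are checked at the Pre-Action Gate, the monitors, and the Escalation Router. The substantive claim is logical independence. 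We read this as: there is no implication $V\Rightarrow R$ and no implication $R\Rightarrow V$, uniformly over trajectories and budgets.

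First we show that validity does not imply budget compliance. We take a single action type $a^\star$ with $V(a^\star)=1$ and cost $r(a^\star)=\epsilon>0$, for example a permitted read-only tool call. The trajectory repeating $a^\star$ $n$ times is valid at every step. Its cumulative cost is $n\epsilon$, which exceeds any fixed $\btok$ once $n>\btok/\epsilon$. To tie this to the paper's own mechanism, we can instead let the trajectory accrete context, which makes the cost superlinear. The $\Theta(n^2)$ State-Snowball of \S\ref{sec:snowball} makes the breach occur already at depth $n=O(\sqrt{\btok})$. The simplest witness, however, needs only linear growth. The ``ruinously expensive'' reading then follows because $\btok$ is arbitrary.

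Second we show that budget compliance does not imply validity. We take an action $a^\dagger$ with $r(a^\dagger)=0$, or any cost at most $\btok$, and $V(a^\dagger)=0$, for example an unauthorized deletion issued as a cheap tool call. The one-step trajectory $(a^\dagger)$ satisfies $R$ but violates $V$. Together the two witnesses show that all four combinations of $(V,R)$ are realizable: we add a trivially valid cheap action and its invalid expensive counterpart. So the two predicates are independent in the sense of generating the full product lattice. This gives the corollary that a gate enforcing only one class cannot discharge the other's obligations.

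The main obstacle is not any calculation but making the statement precise enough to be provable. ``Neither implies the other'' has to be taken over a model rich enough that $V$ and $r$ can be assigned independently. If in some deployment cost were a function of validity, for example if invalid actions were always rejected at zero cost, the claim would fail. I would therefore state explicitly the modelling assumption that $V$ and $r$ are independently specifiable attributes of an action. I would justify it by noting that SARC's specification schema declares predicate classes per constraint, with no coupling between them. The witnesses above then discharge the proposition.
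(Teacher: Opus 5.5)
Your proposal is correct and is essentially the paper's own argument: the paper gives no separate proof and rests the proposition on the two witnesses built into its statement (a safe agent that is ruinously expensive, and a cheap agent that is unsafe), which you make precise with a validity label $V$ and a cumulative-cost predicate $R$ on a common trajectory model. One small correction to your closing remark: within your model, a coupling in which invalid actions cost zero does not break ``neither implies the other.'' A zero-cost invalid action still refutes $R \Rightarrow V$, and a repeated costly valid action still refutes $V \Rightarrow R$; the coupling only removes the invalid-and-over-budget corner of your product lattice, whereas a coupling that genuinely breaks the claim is one in which every invalid action exceeds the budget, since then $R \Rightarrow V$.
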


\noindent Consequences for the artifact:
\begin{itemize}[leftmargin=1.4em,itemsep=2pt]
\item Green SARC is deployable with no safety regime present. Its value derives from the cloud bill, which every operator incurs.
\item It tracks cost and carbon \emph{only}; correctness/accuracy is deliberately out of scope and is not logged as a governed quantity. The quality floor $U_{\min}$ (\S\ref{sec:predictive}) is the caller's concern.
\item It composes with SARC where both are wanted (the sites are shared) but does not depend on it. The reference implementation has no dependency on SARC.
\end{itemize}

\section{The State-Snowball Cost Theorem}\label{sec:snowball}

\begin{definition}[State Snowball]
A multi-agent loop exhibits the State Snowball when each step re-submits the full accreted context, so the per-step prompt grows monotonically with step index.
\end{definition}

\begin{assumption}[Linear accretion]\label{asm:accretion}
The prompt at step $i$ (zero-indexed) is $s_0 + i\,p$ tokens: a fixed base $s_0$ plus $p$ tokens appended per hop. This is the regime in which the unconstrained loop is studied; sub-linear summarization is exactly the mitigation we analyze.
\end{assumption}

\begin{theorem}[Quadratic cost of the unconstrained loop]\label{thm:quad}
Under Assumption~\ref{asm:accretion}, the cumulative prompt-token cost over $n$ steps is
\begin{equation}
T_{\mathrm{prompt}}(n) \;=\; \sum_{i=0}^{n-1}\big(s_0 + i\,p\big) \;=\; n\,s_0 + \frac{p\,n(n-1)}{2} \;=\; \Theta(n^2),
\end{equation}
with leading coefficient $p/2$.
\end{theorem}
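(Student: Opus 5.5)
The plan is to compute the sum directly from Assumption~\ref{asm:accretion} and then read off the asymptotic order. By the assumption, the prompt submitted at step $i$ contains exactly $s_0 + i\,p$ tokens, for $i = 0,\dots,n-1$. The cumulative prompt cost $T_{\mathrm{prompt}}(n)$ is the sum of these per-step sizes. By linearity of summation, I will split it as
\[
\sum_{i=0}^{n-1} s_0 \;+\; p\sum_{i=0}^{n-1} i .
\]
The first term is $n\,s_0$. For the second, I will invoke the Gauss identity $\sum_{i=0}^{n-1} i = n(n-1)/2$, which is proved by pairing $i$ with $n-1-i$ or by induction on $n$. Together these give the closed form $n\,s_0 + p\,n(n-1)/2$.

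For the asymptotic claim, I will expand the closed form as a polynomial in $n$:
\[
\tfrac{p}{2}n^2 + \bigl(s_0 - \tfrac{p}{2}\bigr)n .
\]
I then need explicit two-sided bounds, and here I would take $p>0$ and $s_0\ge 0$.

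\textbf{Upper bound.} Since $n(n-1)\le n^2$ and $n\,s_0\le s_0 n^2$ for $n\ge1$, we get $T_{\mathrm{prompt}}(n)\le (s_0+p/2)\,n^2$.

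\textbf{Lower bound.} For $n\ge2$ we have $n-1\ge n/2$, so $T_{\mathrm{prompt}}(n)\ge p\,n^2/4$.

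These two bounds establish $\Theta(n^2)$. To pin down the leading coefficient, note that $T_{\mathrm{prompt}}(n)/n^2 \to p/2$, because the linear term divided by $n^2$ vanishes.

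The only real obstacle is a matter of hypotheses rather than technique: $\Theta(n^2)$ fails if $p=0$, where the cost is merely linear. The statement therefore implicitly requires $p>0$, i.e.\ a genuine snowball, and I would make that assumption explicit. Nonnegativity of $s_0$ is automatic, since it is a token count.
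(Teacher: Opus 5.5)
Your proposal is correct and follows the paper's own proof: sum the arithmetic series via $\sum_{i=0}^{n-1} i = n(n-1)/2$, then read off the dominant term $\tfrac{p}{2}n^2$. Your explicit two-sided bounds are sound refinements rather than a different route. So is your remark that $\Theta(n^2)$ requires $p>0$, which the paper leaves implicit.
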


\begin{proof}
Direct summation of the arithmetic series $\sum_{i=0}^{n-1} i = n(n-1)/2$. The dominant term $\tfrac{p}{2}n^2 + O(n)$ is $\Theta(n^2)$ with leading coefficient $p/2$.
\end{proof}

\paragraph{Empirical confirmation (synthetic).}\label{sec:snowball-fit} Figure~\ref{fig:snowball} plots the cumulative prompt cost of the benchmark's baseline against loop depth, with $s_0=200$, $p=120$. A second-order least-squares fit recovers $\hat c_2 = 60.00$, identical to the closed-form $p/2 = 60$; the residual is numerically zero. This verifies that the \emph{simulator} faithfully realizes Assumption~\ref{asm:accretion} (the recovered coefficient is a property of the simulator's construction, not independent evidence that real workloads accrete linearly). Bounding the per-hop increment with an Adapter Node (scope cap $360$ tokens) collapses the curve to linear: at depth $40$ the scoped cost is $4.7\times$ lower than the snowball. The Action-Time circuit breaker caps $n$ directly, bounding the other factor.

\paragraph{Real chat traffic.} Whether real multi-turn traffic accretes quadratically is a separate, empirical question. On the ShareGPT replay of \S\ref{sec:realtrace} ($8{,}902$ conversations, up to $45$ turns) we fit the cumulative billed prompt tokens against turn depth to a quadratic. The leading coefficient is $\hat c_2 = -9.4$ with paired-bootstrap $95\%$ CI $[-15.5,\,-7.2]$ --- significantly \emph{negative}. Real conversations are concave in depth, not convex: humans and well-behaved assistants do not blindly re-submit the full transcript every turn. The $\Theta(n^2)$ snowball is therefore a failure mode of \emph{naive multi-agent orchestration} (full-context re-submission), not an intrinsic property of conversation --- which is precisely why the Adapter-Node scoping that prevents it is the highest-leverage lever in the ablation (\S\ref{sec:eval}).

\begin{figure}[H]
\centering
\includegraphics[width=0.62\linewidth]{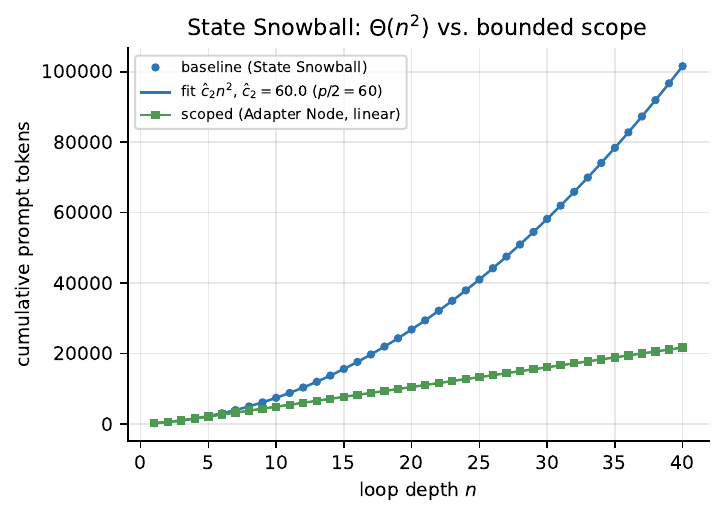}
\caption{State Snowball: the baseline cumulative prompt cost is $\Theta(n^2)$ and its quadratic fit recovers Theorem~\ref{thm:quad}'s leading coefficient $p/2=60$ exactly; bounded scope (Adapter Node) is linear.}
\label{fig:snowball}
\end{figure}

\section{The Predictive Pre-Action Gate}\label{sec:predictive}

This is the paper's central construct. In SARC the Pre-Action Gate evaluates a deterministic predicate. We generalize it to a gate that decides on a \emph{learned, calibrated forecast} of the resource cost of a proposed action. Table~\ref{tab:notation} fixes notation.

\begin{table}[H]
\centering\small
\begin{tabular}{@{}ll@{}}
\toprule
Symbol & Meaning \\
\midrule
$a, x$ & proposed action; its context \\
$\btok$ & live remaining token budget \\
$\kappa(\rho,t)$ & marginal carbon intensity (gCO\textsubscript{2}e/kWh), region $\rho$, time $t$; real grid data used in \S\ref{sec:realgrid}~\cite{electricitymaps}, stipulated by default \\
$\Delta_{\mathrm{lat}}$ & latency/SLA headroom (declared in the state; not enforced in Phase~1) \\
$\hat c(a,x),\,\hat e(a,x)$ & forecast token cost; forecast carbon \\
$c, e$ & realized token cost; realized carbon \\
$\sigma$ & estimator residual standard deviation (per key) \\
$\delta$ & gate risk level; operating-point confidence is $1-\delta$ \\
$q_{1-\delta}$ & split-conformal $(1-\delta)$ quantile of calibration residuals \\
$B_{\mathrm{CO_2}}$ & carbon ceiling; $\Carbon(\tau_{<})$ carbon already spent on trajectory $\tau$ \\
\bottomrule
\end{tabular}
\caption{Notation.}
\label{tab:notation}
\end{table}

\subsection{Augmented state}
The state ingests financial and environmental telemetry: $S' = S \cup \{\btok,\ \kappa(\rho,t),\ \Delta_{\mathrm{lat}}\}$. Of these, $\btok$ and $\kappa(\rho,t)$ are enforced at the gate; $\Delta_{\mathrm{lat}}$ is declared for completeness but is \emph{not} enforced in the Phase-1 implementation (the field exists in the state object and is unused), a divergence we record honestly in \S\ref{sec:limits}.

\subsection{The estimator}
Let $a$ be a proposed action in context $x$. A learned estimator $\hat f_\theta(a,x) = (\hat c(a,x),\, \hat e(a,x))$ predicts expected token cost and carbon before the action fires. The implementation regresses completion tokens on prompt tokens online per $(\text{kind},\text{model})$ key using the Welford-style sufficient statistics~\cite{welford1962}, exposing the residual standard deviation $\sigma$. The gate admits $a$ iff the forecast fits the remaining budget at confidence $1-\delta$ and the carbon ceiling:
\begin{equation}
\Pr\!\big[\,c(a,x) \le \btok \,\big] \ge 1-\delta
\quad\text{and}\quad
\hat e(a,x) \le B_{\mathrm{CO_2}} - \Carbon(\tau_{<}).
\end{equation}
Operationally the first test is a one-sided upper bound $\bar c(a,x) \le \btok$. The implementation forms $\bar c = \hat c + z_{1-\delta}\,\sigma$ with $z_{1-\delta}$ the normal quantile (the ``Normal-$\sigma$ gate''); \S\ref{sec:conformal} replaces $z_{1-\delta}\sigma$ with a distribution-free conformal margin $q_{1-\delta}$. Rule-based accounting is the special case $\hat f_\theta \equiv$ a constant threshold independent of $(a,x)$: the \emph{zero-information gate}.

\subsection{The closed learning loop}
The estimator is trained on the Post-Action Auditor's own output, closing a loop:
\begin{equation}
\text{predict } \hat f_\theta \;\longrightarrow\; \text{act} \;\longrightarrow\; \text{log actual } (c,e) \;\longrightarrow\; \text{retrain } \theta .
\end{equation}
At cold start $\hat f_\theta$ is weak and the gate behaves conservatively (worst-case forecast); it sharpens as the audit log accumulates. \S\ref{sec:eval} shows the forecast MAE collapsing from a cold-start value of $\sim$4{,}000 tokens to the irreducible noise floor within $\sim$20 observations.

\subsection{Release sequencing: step then trajectory}
The estimator is built in two phases. \textbf{Phase 1, per-action:} $\hat f_\theta$ predicts the \emph{next step} only --- simple to deploy, and it generates the labeled actuals needed for Phase 2. \textbf{Phase 2, full-trajectory:} a planner-level estimator $\hat F_\theta(\pi,x)$ predicts the cost of an entire \emph{plan} $\pi$ before the agent starts, enabling rejection of expensive plans, not merely expensive steps. Phase 1 is the data engine for Phase 2; only Phase 1 is implemented here (Phase 2 is an interface stub).

\subsection{Budget safety}
We give two safety statements: a pointwise one assuming calibration, and a distribution-free one (proved in \S\ref{sec:conformal}).

\begin{proposition}[Pointwise budget safety]\label{prop:pointwise}
If the estimator is calibrated so that $\Pr[c > \bar c(a,x)\mid x] \le \delta$ pointwise, then an admitted action breaches the budget margin it was admitted against with probability at most $\delta$; residual breaches scale with the gate risk level $\delta$, not with the opportunity for breach.
\end{proposition}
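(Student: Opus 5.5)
The argument is a direct conditioning on the admission event, so the plan is to make precise what "admitted against" means and then apply the calibration hypothesis inside that event. Fix an action $a$ proposed in context $x$, and let $A$ be the admission event $\{\bar c(a,x)\le \btok\}$ (jointly with the carbon test, which we carry along but do not need). The key observation is that $A$ is determined by $x$, the estimator state $\theta$, the live budget $\btok$, and $\Carbon(\tau_{<})$, all of which are fixed \emph{before} the action fires. So $A$ is measurable with respect to the pre-action information that the calibration statement conditions on. We take that information to be $x$ enriched with the gate's state, and we say explicitly that the pointwise hypothesis is read conditional on this sigma-field. This is the one modelling choice the proof needs.

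\emph{Step 1.} We define the breach of the margin the action was admitted against as the event $\{c>\bar c(a,x)\}$. On $A$ this event contains the budget breach $\{c>\btok\}$, since $\bar c\le\btok$ there. Hence $\Pr[c>\btok,\,A\mid x]\le \Pr[c>\bar c(a,x),\,A\mid x]$.

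\emph{Step 2.} Since $A$ is pre-action measurable, $\mathbf 1_A$ factors out of the conditional probability, giving $\Pr[c>\bar c,\,A\mid x]=\mathbf 1_A\Pr[c>\bar c\mid x]\le \delta\,\mathbf 1_A$. Dividing by $\Pr[A\mid x]$ on admitted actions, or simply taking expectations over $x$, yields $\Pr[c>\bar c\mid A]\le\delta$ and therefore $\Pr[c>\btok\mid A]\le\delta$.

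\emph{Step 3.} The scaling claim follows by summing over a trajectory: the expected number of breaching admitted actions is at most $\delta\cdot\mathbb E[\#\text{admitted}]$. It therefore scales with $\delta$ times the number of admissions, not with the number of proposals or the size of any overrun opportunity. Rejected actions contribute zero, since they never fire.

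The main obstacle is the measurability and selection issue in Step 2. If the calibration held only marginally and not conditionally on the information used to decide admission, then conditioning on $A$ could select exactly the contexts where the forecast under-covers. In that case $\Pr[c>\bar c\mid A]$ could exceed $\delta$, and the pointwise hypothesis in the statement is precisely what rules this out. I would first state the hypothesis as conditional on the full pre-action sigma-field, including $\theta$ and $\btok$. The closed learning loop makes $\theta$ depend on past actuals, so I would also note that this hypothesis is consistent with that loop, because the conditioning is on the past only. This also motivates why the distribution-free replacement in \S\ref{sec:conformal} yields only a marginal guarantee.
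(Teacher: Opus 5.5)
Your proof is correct and is essentially the paper's argument. The paper gives no standalone proof; the intended reasoning is the containment step in the proof of Theorem~\ref{thm:safety}: admission forces $\bar c\le\btok$, so $\{c>\btok\}\subseteq\{c>\bar c\}$, the calibration premise bounds that event by $\delta$, and linearity of expectation over admits gives the trajectory scaling. Your explicit handling of the selection issue is a worthwhile sharpening of a point the paper leaves implicit, not a different route: you read the premise as conditional on the full pre-action information (including $\theta$ and $\btok$), so that $\mathbf 1_A$ factors out and the bound holds conditionally on admission rather than only jointly.
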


\begin{theorem}[Predictive Gate Safety, split-conformal]\label{thm:safety}
Let the conformal margin $q_{1-\delta}$ be calibrated on an exchangeable set of residuals (Assumption~\ref{asm:exch}, \S\ref{sec:conformal}) and let the gate admit $a$ only when $\hat c(a,x) + q_{1-\delta} \le \btok$. Then for a fresh exchangeable action the per-action budget-breach probability satisfies $\Pr[c(a,x) > \btok] \le \delta$, with no distributional assumption on the residuals. Over a trajectory of $K$ gated admits, the expected number of breaches is at most $K\delta$.
\end{theorem}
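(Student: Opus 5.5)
The plan is to reduce the statement to the standard split-conformal coverage lemma and then add a monotonicity step. Let the calibration residuals be $R_j = c_j - \hat c(a_j,x_j)$ for $j=1,\dots,m$. These are computed with an estimator fitted on data disjoint from the calibration set. Define $q_{1-\delta}$ as the $\lceil (m+1)(1-\delta)\rceil$-th smallest of $R_1,\dots,R_m$, and set $q_{1-\delta}=+\infty$ if that index exceeds $m$. Let the fresh residual be $R_{m+1} = c(a,x) - \hat c(a,x)$.

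The first step invokes Assumption~\ref{asm:exch}: the sequence $(R_1,\dots,R_{m+1})$ is exchangeable. Hence the rank of $R_{m+1}$ among the $m+1$ values is uniform on $\{1,\dots,m+1\}$, with ties broken at random, or handled conservatively since ties can only increase coverage. This gives
\[
\Pr\bigl[R_{m+1} \le q_{1-\delta}\bigr] \ge \frac{\lceil (m+1)(1-\delta)\rceil}{m+1} \ge 1-\delta .
\]
No distributional assumption enters beyond exchangeability, and the bound holds in finite samples.

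The second step links coverage to a budget breach. On the event $\{R_{m+1}\le q_{1-\delta}\}$ we have $c(a,x)\le \hat c(a,x)+q_{1-\delta}$. The gate admits only when $\hat c(a,x)+q_{1-\delta}\le \btok$, so on that same event $c(a,x)\le\btok$. Therefore
\[
\Pr\bigl[c(a,x)>\btok,\ \text{admitted}\bigr]\le \Pr\bigl[R_{m+1}>q_{1-\delta}\bigr]\le\delta .
\]
Here I read the per-action statement as the breach probability for an action that goes through the gate; a rejected action incurs no cost. I will state this explicitly.

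This is where I expect the main subtlety. Because $\btok$ is fixed at decision time, the event ``admitted'' depends on $\hat c(a,x)$. Conditioning on admission would therefore break exchangeability. I will avoid conditioning entirely: I bound the joint event (admitted and breached) by the unconditional miscoverage event, which is exactly what the argument above does. I will also note that the marginal guarantee is over the joint draw of the calibration set and the fresh point, not conditional on $x$. This distinguishes it from Proposition~\ref{prop:pointwise}.

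The trajectory claim follows by linearity of expectation. Let $B_k$ be the indicator that the $k$-th gated admit breaches its budget. Each $k$ is a fresh exchangeable action relative to the calibration set. Then
\[
\mathbb{E}\Bigl[\sum_{k=1}^K B_k\Bigr]=\sum_{k=1}^K\Pr[B_k=1]\le K\delta .
\]
This step requires no independence across steps, which is why only the expectation is bounded and not a tail. The live budget $\btok$ may change between steps. That is harmless, since each per-step bound holds for whatever threshold the gate compares against at that step.

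For a high-probability trajectory bound, the plan defers to Theorem~\ref{thm:anytime}.
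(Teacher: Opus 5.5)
Your proposal is correct and is essentially the paper's proof: exchangeability makes the rank of the test score uniform, so miscoverage is at most $1-\lceil (m+1)(1-\delta)\rceil/(m+1)\le\delta$; the admission rule $\hat c+q_{1-\delta}\le\btok$ places the breach event inside the miscoverage event; and linearity of expectation gives $K\delta$. You bound the joint event $\{\text{admitted},\ c>\btok\}$ rather than a probability conditional on admission, which is slightly more careful than the paper's one-line inclusion; to stay consistent in the trajectory step, index $B_k$ by gate evaluations rather than by admitted actions, because treating ``the $k$-th admit'' as a fresh exchangeable draw reintroduces exactly the selection-on-$\hat c$ conditioning you set out to avoid (the paper's linearity step glosses over the same point).
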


\noindent The proof is in \S\ref{sec:conformal}. Theorem~\ref{thm:safety} bounds the breach probability of each \emph{individual} admitted action; for the whole trajectory, Theorem~\ref{thm:anytime} (\S\ref{sec:conformal}) gives an anytime-valid probabilistic tail bound on the cumulative over-spend across any trajectory prefix, uniformly over stopping times. Together these mirror, in the resource domain, SARC's claim that residual hard violations scale with enforcement-stack error rather than with the opportunity for violation.

\subsection{The Sustainable Token Yield reward}
Within the gated action space the agent optimizes a reward that penalizes brute-force inference where deterministic computation suffices:
\begin{equation}
R(\tau) \;=\; U(\tau)\;-\;\lambda_c \sum_{i} c_i\,\mathrm{tok}_i\;-\;\lambda_e \sum_{i}\kappa(\rho_i,t_i)\,E_i ,
\end{equation}
with task utility $U$, FinOps weight $\lambda_c$, GreenOps weight $\lambda_e$. As SARC argues, this finite penalty shapes behavior \emph{within} the hard budget/carbon constraints; it does not replace them. \S\ref{sec:penalty} makes this quantitative: a soft penalty tuned to the budget in expectation still breaches it most of the time.

\subsection{The constrained optimization}
\begin{equation}
\begin{aligned}
\min_{\pi}\quad & \mathbb{E}\big[\,C_{\mathrm{tok}}(\tau)\,\big] + C_{\mathrm{infra}}(\tau)\\
\text{s.t.}\quad
& \btok(\tau) \ge 0 && \text{(budget; Pre-Action Gate)}\\
& \mathrm{loops}(\tau) \le n_{\max} && \text{(loop bound; Action-Time Monitor)}\\
& \Carbon(\tau) \le B_{\mathrm{CO_2}} && \text{(ESG ceiling; Post-Action Auditor)}\\
& U(\tau) \ge U_{\min} && \text{(quality floor; caller-owned)}
\end{aligned}
\end{equation}
Budget and carbon are hard constraints enforced at their sites, not merely penalized in $R$.

\begin{proposition}[Gate as a CBwK feasibility oracle]\label{prop:cbwk}
Let $\pi$ be any Bandits-with-Knapsacks policy achieving $O(\sqrt{KT})$ regret with per-action costs bounded in $[0,C]$. Composing $\pi$ with the split-conformal Pre-Action Gate --- restrict $\pi$'s action set at each round to $\{a : \hat c(a,x)+q_{1-\delta}\le \btok\}$ and let $\pi$ select within that set --- yields a policy $\pi'$ whose regret remains $O(\sqrt{KT})$ and which additionally satisfies the anytime-valid budget bound of Theorem~\ref{thm:anytime} with probability at least $1-\delta$.
\end{proposition}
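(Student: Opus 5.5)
The proof has two independent parts: a regret claim and a safety claim. I would establish each separately and then combine them with a union bound.

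\textbf{Safety.} This part should follow almost directly from Theorem~\ref{thm:anytime}. The composed policy $\pi'$ only ever issues actions in the gated set $\{a:\hat c(a,x)+q_{1-\delta}\le \btok\}$. So every admitted action is a gated admit in exactly the sense of Theorem~\ref{thm:safety}, and the cumulative over-spend process of $\pi'$ is the process controlled there. The one thing to check is that the selection rule inside the gated set is irrelevant. I would argue this as follows:
\begin{enumerate}
\item Each admitted residual $c-\hat c$ is exchangeable with the calibration residuals.
\item The supermartingale built in \S\ref{sec:conformal} is therefore still a nonnegative supermartingale under the filtration generated by $\pi$'s internal randomness.
\item $\pi$'s choice is measurable with respect to the past and independent of the fresh residual.
\item Ville's inequality then gives the anytime bound with probability at least $1-\delta$, uniformly over stopping times. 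In particular it holds at whatever round $\pi'$ halts on budget exhaustion.
\end{enumerate}

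\textbf{Regret.} I would couple $\pi'$ with $\pi$ run on a tightened knapsack. The gate removes an arm at round $t$ only when $\hat c(a,x)>\btok-q_{1-\delta}$. In BwK terms this is equivalent to running with the budget reduced by a margin of at most $q_{1-\delta}\le C$ per knapsack, since costs lie in $[0,C]$ and the quantile of bounded residuals is at most $C$. I would then do two things:
\begin{enumerate}
\item Use the standard BwK regret bound against the LP benchmark $\mathrm{OPT}_{\mathrm{LP}}(B)$, which is concave and scales linearly in $B$. Reducing $B$ by $O(C)$ loses at most $O(C\cdot\mathrm{OPT}/B)=O(1)$ in the benchmark when $B=\Omega(T)$, and $O(\sqrt{T})$ in the usual regime $B\gtrsim\sqrt{T}$.
\item Account for the rounds at which the gate blocks an arm that $\pi$ would have played. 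These occur only once the remaining budget is within $q_{1-\delta}+\max\hat c$ of zero, which is at most $O(C)$ budget-units' worth of rounds before $\pi$ itself would stop. Their contribution to regret is therefore $O(1)$ per knapsack.
\end{enumerate}
Summing gives regret $O(\sqrt{KT})+O(C)=O(\sqrt{KT})$.

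The main obstacle is the regret step. The gate restricts arms adaptively, so $\pi$'s internal confidence bounds are fed a censored action stream, and the regret guarantee of $\pi$ need not transfer to this stream. I would first handle this by letting $\pi$ update only on rounds it actually plays. I would then treat blocked rounds as early termination of $\pi$ on a budget $B-O(C)$, which reduces the statement to the black-box guarantee of $\pi$ at a slightly smaller budget. If this reduction fails for general policies, I would state the proposition for policies whose regret bound is monotone in $B$ and Lipschitz in the budget. The final step is a union bound: the regret statement holds in expectation, or with high probability if $\pi$'s guarantee is high-probability, and the safety statement holds with probability at least $1-\delta$.
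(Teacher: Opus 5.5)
The gap is in step~(2) of your regret argument, where you claim the rounds on which the gate blocks an arm cost only $O(1)$. It is true that nothing is blocked while $\btok\ge M:=q_{1-\delta}+\sup_{a,x}\hat c(a,x)$. But a remaining budget of $O(C)$ does not mean $O(C)$ remaining rounds or $O(1)$ remaining reward. Costs lie in $[0,C]$, so cheap or free arms let $\pi$ keep earning on a nearly exhausted budget. Moreover, once $\btok<q_{1-\delta}$ the gate blocks \emph{every} arm, including those with $\hat c=0$, so $\pi'$ is frozen.

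Here is a concrete instance. Let arm~1 have reward $1$ and cost uniform on $[0,C]$ (forecast $C/2$), and arm~2 reward $1/2$ and cost $0$. Take $B=\beta T$ with $\beta<C/2$, and calibration residuals pooled over both arms so that $0<q_{1-\delta}\le C/2$. The schedule ``play arm~1 while $\btok\ge C$, then arm~2'' is within $O(1)$ of $\mathrm{OPT}_{\mathrm{LP}}(B)=T/2+B/C$, and the gate admits every arm-1 pull it makes. But with constant probability the last such pull leaves $\btok<q_{1-\delta}$. After that, arm~2 is blocked for the remaining $\Theta(T)$ rounds, so $\pi'$ has regret $\Theta(T)$.

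So your bound needs $\pi$ to pace its consumption at rate about $B/T$, so that the near-exhaustion window lasts $O(MT/B)$ rounds. The hypothesis ``$O(\sqrt{KT})$ regret'' does not supply this. For the same reason the gate is not ``equivalent to a reduced budget'': it filters arms by forecast while $\pi$'s pacing still uses $B$. Your monotone/Lipschitz-in-$B$ fallback does not repair this, since $\pi'$ is not $\pi$ at \emph{any} budget. What does work is to instantiate $\pi$ with capacity $B-M$. The gate then never binds while $\pi$ runs, and concavity of $\mathrm{OPT}_{\mathrm{LP}}$ with $\mathrm{OPT}_{\mathrm{LP}}(0)\ge 0$ bounds the benchmark loss by $M\,\mathrm{OPT}_{\mathrm{LP}}(B)/B$. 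That is, however, a different composition from the one stated. Also, $M=O(C)$ needs $\hat c\in[0,C]$ and a calibration set large enough that $q_{1-\delta}<\infty$.

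The paper avoids all this by measuring regret against the best \emph{gate-feasible} arm rather than $\mathrm{OPT}_{\mathrm{LP}}(B)$. Under that benchmark, ``$\pi$ run on a smaller feasible set'' keeps its regret essentially by definition. You are aiming at the stronger statement against the unrestricted benchmark, which the example shows fails without a pacing assumption.

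On safety, your steps~3--4 are the paper's ``Theorem~\ref{thm:anytime} applies verbatim.'' Step~1, however, is both false and unnecessary. Once actions are chosen adaptively and filtered on $\hat c$, admitted residuals are not exchangeable with the calibration scores; this is exactly the covariate selection that marginal coverage does not control. And Theorem~\ref{thm:anytime} never uses exchangeability. The supermartingale property comes instead from that theorem's conditional sub-Gaussian hypothesis. You must assume it along the filtration that includes $\pi$'s choices, and uniformly over the arm chosen. In particular this requires a common conditional mean $\mu$ across arms, since arm-dependent forecast bias breaks the centering.
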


\begin{proof}[Proof sketch]
The gate is a per-round feasibility filter applied to the action set, so $\pi'$ is $\pi$ run on a (possibly smaller) feasible set; its per-round regret against the best \emph{feasible} arm is unchanged, preserving the $O(\sqrt{KT})$ order. The added budget guarantee holds because $\pi'$ only ever plays arms admitted by the gate, to which Theorem~\ref{thm:anytime} applies verbatim; a union bound over the regret event and the conformal coverage event ($\le\delta$) gives the composite $1-\delta$ guarantee. The gate supplies feasibility; the bandit supplies optimality. \end{proof}

\subsection{Calibration matters}\label{sec:calib-pointer}
Proposition~\ref{prop:pointwise} is only as good as the calibration premise; a systematically optimistic $\hat f_\theta$ admits overspend. Theorem~\ref{thm:safety} discharges the premise without a distributional assumption, at the cost of a held-out calibration set. \S\ref{sec:conformal} states the assumptions, proves the bound, and validates coverage empirically.

\section{Mapping the Enforcement Sites and Implementation}\label{sec:sites}

Table~\ref{tab:sites} maps the four SARC sites to Green SARC predicates and to the modules that implement them. The reference implementation is a standalone, dependency-free Python library; it composes with SARC via shared sites rather than importing it.

\begin{table}[H]
\centering\small
\begin{tabular}{@{}p{0.20\linewidth} p{0.50\linewidth} p{0.22\linewidth}@{}}
\toprule
\textbf{Site} & \textbf{Green SARC predicate / role} & \textbf{Module} \\
\midrule
Pre-Action Gate & Predictive cost/carbon forecast; admit iff $\bar c(a,x)\le\btok$ at $1-\delta$ and carbon fits. & \texttt{gate.py}, \texttt{estimator.py} \\
Action-Time Monitor & Circuit breaker on loop count / marginal cost; kills runaway retry/re-plan loops. & \texttt{monitor.py} \\
Post-Action Auditor & Logs predicted-vs-actual cost/carbon per action: ESG record \emph{and} estimator training signal. & \texttt{auditor.py} \\
Escalation Router & Routes budget-/carbon-exhausted tasks to human review or a deterministic fallback. & \texttt{escalation.py} \\
State scoping & Adapter Node bounds the per-hop increment $p$ (\S\ref{sec:snowball}). & \texttt{scoping.py} \\
\bottomrule
\end{tabular}
\caption{The four SARC sites under Green SARC predicates, plus state scoping. Only the Pre-Action Gate changes character: from rule to calibrated forecast.}
\label{tab:sites}
\end{table}

The runtime gate (\texttt{green\_sarc.gate}) defaults to the Normal-$\sigma$ upper bound of \S\ref{sec:predictive}; as of v0.3.0 the split-conformal upper bound of \S\ref{sec:conformal} is also available at runtime, opt-in via \texttt{calibrator=...} (\S\ref{sec:runtimeconformal}).

\subsection{Conformal calibration in the runtime gate}\label{sec:runtimeconformal}
The split-conformal bound of \S\ref{sec:conformal} and the adaptive variant of \S\ref{sec:realtrace} are no longer only paper-side analyses: v0.3.0 ships them as runtime strategies in \texttt{green\_sarc.calibrator} (\texttt{SplitConformal}, \texttt{ACIConformal}, behind a \texttt{Calibrator} protocol). The \texttt{PreActionGate} constructor gains an optional \texttt{calibrator=...} argument; supplying it replaces the Normal-$\sigma$ token bound with the conformal one, and omitting it preserves the prior behaviour exactly (all pre-existing tests pass unchanged, and \texttt{make verify} holds).

We validate the runtime path against the paper-side analysis: re-running the \S\ref{sec:realtrace} ShareGPT study with the runtime \texttt{SplitConformal} calibrator (\texttt{--use-runtime-conformal}) reproduces the held-out coverage of the offline analysis to within $0.03$ percentage points across $\delta\in[0.01,0.4]$. Engineering contract: the calibrator is \texttt{fit} once from a residual log (offline) and, for \texttt{ACIConformal}, \texttt{update}d online from realized-vs-predicted cost at the Post-Action Auditor; the protocol lives in \texttt{green\_sarc.calibrator} and the public surface follows semver within $0.x$. The difference between ``we prove'' (\S\ref{sec:conformal}) and ``we ship'' is now a one-argument opt-in.

\paragraph{Capabilities: today vs.\ roadmap.} Table~\ref{tab:roadmap} states explicitly which Green SARC capabilities ship today (Phase 1) and which are roadmap items. The paper's empirical results in \S\ref{sec:eval} and \S\ref{sec:binding} use only Phase 1 features; \S\ref{sec:realtrace} uses real data but Phase 1 code.

\begin{table}[H]
\centering\small
\begin{tabular}{@{}p{0.215\linewidth} p{0.36\linewidth} p{0.36\linewidth}@{}}
\toprule
\textbf{Capability} & \textbf{Phase 1 (today)} & \textbf{Phase 2 (roadmap)} \\
\midrule
Pre-Action Gate (step) & Normal-$\sigma$; conformal opt-in via \texttt{calibrator=} & ACI as default + conditional coverage \\
Predictive forecast & OLS per $(\text{kind},\text{model})$ & $+$ trajectory estimator $\hat F_\theta(\pi,x)$ \\
Action-Time Monitor & Loop / marginal / total cost breaker & $+$ latency-headroom enforcement \\
Post-Action Auditor & JSONL / SQLite & $+$ Parquet, multi-tenant attribution \\
Budget & Single-process \texttt{threading.Lock}; distributed Redis backend (experimental) & Postgres durable ledger + fair-share reservations \\
Escalation Router & Deterministic + log-only handlers & $+$ plan-level rejection on trajectory forecast \\
Adapters & MCP, PAIS sidecar, OTel SpanProcessor & $+$ cross-process OTLP receiver; MCP transport auth \\
Audit schema & \texttt{plan\_id}, \texttt{session\_id}, \texttt{parent\_action\_id} & Phase-2 trajectory schema (typed events) \\
\bottomrule
\end{tabular}
\caption{Phase 1 (shipping) vs.\ Phase 2 (roadmap) capabilities.}
\label{tab:roadmap}
\end{table}

\noindent\textbf{Reproducibility contract.} The committed \texttt{benchmarks/reference\_summary.json} (20 seeds, 4 conditions $\times$ 4 metrics) is the regression contract: a pull request that drifts these numbers by more than $2\%$ per cell, or by more than $1.5$ absolute percentage points on the \texttt{+full} token reduction, fails CI via \texttt{make verify}. This paper is companion to release \texttt{v0.4.0} of \texttt{besanson/Greensarc}; the tag pins the exact source tree that produced every number cited above.

\noindent\textbf{API stability.} The public surfaces in \texttt{green\_sarc.governor}, \texttt{.state}, \texttt{.gate}, \texttt{.auditor}, \texttt{.escalation}, and the three adapters (\texttt{mcp}, \texttt{pais\_sidecar}, \texttt{otel}) follow semver within $0.x$; the \texttt{examples/} and \texttt{benchmarks/} paths and the audit-record schema may evolve.

\noindent\textbf{Tests and CI.} The library passes 152 unit and integration tests on Python 3.11 and 3.12 (an additional SARC-composition suite is skipped unless the optional \texttt{sarc} extra is installed), including concurrency race tests, runtime conformal-coverage tests, an ACI-restoration test, sidecar SSE streaming tests, a distributed-budget race test, Prometheus-metrics tests, live-feed loader tests, and end-to-end ablation reproduction. CI runs \texttt{ruff}, \texttt{mypy} on \texttt{src/} and \texttt{benchmarks/}, \texttt{pytest -q}, and \texttt{make verify} on every push; the release workflow gates publication on the test matrix.

\noindent\textbf{Gate overhead.} A microbenchmark (\texttt{benchmarks/gate\_overhead.py}, $2{\times}10^5$ warm decisions, single process) puts the Pre-Action Gate's cost at p50 $1.7~\mu$s / p99 $3.7~\mu$s per decision on the default Normal-$\sigma$ path ($\sim$$0.5$M decisions/s) --- negligible beside any model call. The split-conformal path is p50 $12.2~\mu$s / p99 $\sim$$42~\mu$s, dominated by recomputing the empirical residual quantile over the calibration set on each decision (a cost a production deployment would precompute and amortize); even unamortized it stays under $0.1$\,ms. Latency is hardware-dependent; the committed figure is from the reference runner.

\subsection{Deploying Green SARC}\label{sec:deploy}
Three integration patterns, in increasing order of loose coupling:
\begin{enumerate}[leftmargin=1.5em,itemsep=2pt]
\item \textbf{In-process} --- \texttt{GreenGovernor.with\_defaults(...)} wraps the agent's executor directly. Step-level safety, single replica; best for single-agent CLI tools and notebooks (\texttt{green\_sarc.governor}).
\item \textbf{PAIS sidecar} --- ASGI middleware on \texttt{/v1/chat/completions}; returns HTTP 429 on reject, with SSE-aware passthrough for streaming. Best for agents fronted by an OpenAI-compatible API (\texttt{green\_sarc.adapters.pais\_sidecar}).
\item \textbf{KAOS-managed MCP advisory + OTel observe} --- register Green SARC as an MCP server for advisory gate/audit tools, and consume actuals cross-process via the OTel \texttt{SpanProcessor}. Loosest coupling, advisory-only safety (\texttt{green\_sarc.adapters.mcp}, \texttt{green\_sarc.adapters.otel}).
\end{enumerate}

\section{Split-Conformal Calibration of the Gate}\label{sec:conformal}

We now discharge the calibration premise of Proposition~\ref{prop:pointwise} and prove Theorem~\ref{thm:safety}.

\begin{assumption}[Exchangeability]\label{asm:exch}
The estimator $\hat c$ is fixed (trained on data disjoint from the calibration set). The one-sided nonconformity scores $R_i = c_i - \hat c(a_i,x_i)$ on the calibration set $\{1,\dots,m\}$ and the score $R_{\mathrm{test}}$ of a fresh action are exchangeable.
\end{assumption}

Define $q_{1-\delta}$ as the $\lceil (m+1)(1-\delta)\rceil$-th smallest of $R_1,\dots,R_m$ (and $q_{1-\delta}=+\infty$ if that index exceeds $m$). The calibrated gate bound is $\bar c(a,x) = \hat c(a,x) + q_{1-\delta}$.

\begin{proof}[Proof of Theorem~\ref{thm:safety}]
By Assumption~\ref{asm:exch} the $m+1$ scores $R_1,\dots,R_m,R_{\mathrm{test}}$ are exchangeable, so the rank of $R_{\mathrm{test}}$ among them is uniform on $\{1,\dots,m+1\}$ (ties broken at random). Then
\[
\Pr\!\big[R_{\mathrm{test}} > q_{1-\delta}\big]
= \Pr\!\big[\text{rank}(R_{\mathrm{test}}) > \lceil (m+1)(1-\delta)\rceil\big]
\le 1 - \frac{\lceil (m+1)(1-\delta)\rceil}{m+1}
\le \delta .
\]
Since $R_{\mathrm{test}} = c - \hat c$, the event $\{R_{\mathrm{test}} > q_{1-\delta}\}$ is exactly $\{c > \bar c\}$. If the gate admits only when $\bar c \le \btok$, then $\{c > \btok\} \subseteq \{c > \bar c\}$, giving $\Pr[c > \btok] \le \delta$. The trajectory bound follows by linearity of expectation over $K$ gated admits. This is the standard inductive-conformal quantile lemma~\cite{vovk2005,angelopoulos2023}, specialized to a one-sided cost score.
\end{proof}

\begin{remark}[Marginal, not conditional]\label{rem:marginal}
The guarantee is marginal over the residual distribution, not conditional on $x$, and assumes exchangeability. Under workload drift it can be restored online with adaptive conformal inference~\cite{gibbs2021} (\S\ref{sec:realtrace}).
\end{remark}

\paragraph{Anytime-valid trajectory safety.} Theorem~\ref{thm:safety} bounds each admitted action's breach probability marginally. For the cumulative over-spend along a trajectory --- monitored continuously, at a data-dependent stopping time --- we want a \emph{time-uniform} bound. Write the per-step over-prediction residual as $r_i = c_i - \hat c(a_i,x_i)$ and let $S_k=\sum_{i=1}^{k} r_i$ be the cumulative residual over the first $k$ admitted actions, with $\mu=\mathbb{E}[r_i]$ and $\mathcal F_k$ the natural filtration of admitted actions.

\begin{theorem}[Anytime-valid cumulative over-spend]\label{thm:anytime}
Suppose the centered residuals $r_i-\mu$ are conditionally $\hat\sigma$-sub-Gaussian given $\mathcal F_{i-1}$. Then for any $\delta\in(0,1)$, simultaneously over all $k\ge1$ --- and hence at every stopping time $\tau$ ---
\begin{equation}
\Pr\!\Big[\,\exists\,k\ge 1:\ S_k \,>\, k\mu \,+\, \hat\sigma\sqrt{2k\log(1/\delta)}\,\Big]\;\le\;\delta .
\end{equation}
Consequently the realized cumulative cost exceeds its forecast plus an $O(\hat\sigma\sqrt{k})$ envelope with probability at most $\delta$, uniformly over the trajectory.
\end{theorem}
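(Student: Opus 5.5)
The plan is to build an exponential supermartingale from the sub-Gaussian increments and apply Ville's inequality. Let $D_i = r_i-\mu$ and $M_k=\sum_{i\le k} D_i = S_k-k\mu$. For a fixed $\lambda>0$, the conditional sub-Gaussian hypothesis gives $\mathbb E[e^{\lambda D_i - \lambda^2\hat\sigma^2/2}\mid\mathcal F_{i-1}]\le 1$. Hence
\[
Z_k(\lambda)=\exp\big(\lambda M_k-\tfrac{k}{2}\lambda^2\hat\sigma^2\big)
\]
is a nonnegative supermartingale with $Z_0=1$. Ville's inequality then gives $\Pr[\exists k:\ Z_k(\lambda)\ge 1/\delta]\le\delta$. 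Rearranging, with probability at least $1-\delta$, simultaneously for all $k$,
\[
S_k\le k\mu+\frac{\lambda\hat\sigma^2 k}{2}+\frac{\log(1/\delta)}{\lambda}.
\]
Validity at stopping times follows from the same event, because it is a statement about the whole path; alternatively one can use optional stopping on $Z_{\tau\wedge k}$.

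The main obstacle is that this boundary is linear in $k$, not of the form $\hat\sigma\sqrt{2k\log(1/\delta)}$. Choosing $\lambda=\sqrt{2\log(1/\delta)/(k_0\hat\sigma^2)}$ reproduces the displayed envelope exactly at the single index $k=k_0$, and it dominates the envelope for all other $k$ (by AM--GM). A fixed-$\lambda$ argument therefore proves the displayed bound at any one fixed horizon $k_0$. It also proves it at any stopping time bounded by $k_0$, via the weaker linear boundary. It does not prove it uniformly over all $k\ge1$.

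In fact the law of the iterated logarithm shows $M_k/\sqrt{2k\log(1/\delta)}$ exceeds $\hat\sigma$ infinitely often for, e.g., Gaussian increments. So the statement as literally written cannot hold uniformly, and the plan must repair the boundary. I would do this in one of two ways.

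\textbf{Stitching.} Split time into geometric epochs $k\in[2^j,2^{j+1})$. Apply the fixed-$\lambda$ bound in each epoch with $\lambda_j$ tuned to $2^j$ and confidence $\delta_j=\delta/((j+1)(j+2))$, and union-bound over epochs. This yields $S_k\le k\mu+c\,\hat\sigma\sqrt{k(\log(1/\delta)+2\log\log(2k))}$ for an absolute constant $c$.

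\textbf{Mixture.} Integrate $Z_k(\lambda)$ against a Gaussian prior on $\lambda$. This gives a normal-mixture boundary of order $\hat\sigma\sqrt{(k+\rho)\log((k+\rho)/(\rho\delta^2))}$, with Ville applied once.

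Either route delivers the stated consequence: realized cumulative cost exceeds forecast plus an $\tilde O(\hat\sigma\sqrt{k})$ envelope with probability at most $\delta$, uniformly over the trajectory. The $\tilde O$ hides a $\log\log k$ or $\log k$ factor that is unavoidable. I would state the theorem in this corrected form, and note that the exact $\sqrt{2k\log(1/\delta)}$ constant holds only at a fixed, pre-specified horizon.
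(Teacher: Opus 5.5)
Your proposal is correct, including its central claim that the displayed inequality cannot hold uniformly in $k$. On that point it is more accurate than the paper's own proof.

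\textbf{Same route as the paper.} Your mixture route is the paper's argument: the exponential supermartingale, a centered $N(0,\eta^2)$ prior on $\lambda$, and one application of Ville's inequality. Appendix~\ref{app:conformal} derives your boundary. With $\rho=1/(\eta^2\hat\sigma^2)$, its expression $\sqrt{(1+\eta^2\hat\sigma^2 k)\tfrac{2}{\eta^2}\log\tfrac{\sqrt{1+\eta^2\hat\sigma^2 k}}{\delta}}$ equals your $\hat\sigma\sqrt{(k+\rho)\log\tfrac{k+\rho}{\rho\delta^2}}$.

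\textbf{Where the paper goes wrong.} The paper then asserts that this boundary is $O(\hat\sigma\sqrt{k\log(1/\delta)})$. The main-text proof describes the gap to the displayed $\hat\sigma\sqrt{2k\log(1/\delta)}$ as ``lower-order $\log\log$ terms absorbed by the mixture''. Both claims are false:
\begin{itemize}
\item The $\log\sqrt{1+\eta^2\hat\sigma^2 k}$ term grows like $\tfrac12\log k$. So for fixed $\delta$ the boundary is $\Theta(\hat\sigma\sqrt{k\log k})$.
\item Your LIL example (i.i.d.\ $N(\mu,\hat\sigma^2)$ residuals) makes the displayed event occur with probability one. It also rules out any uniform $O(\hat\sigma\sqrt{k})$ envelope.
\end{itemize}
So the theorem as stated is not established by the paper's proof, and cannot be. Your corrected statement is what the argument actually supports.

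\textbf{Beyond the paper.} Your stitching route goes further than the paper does. It attains the $\sqrt{k(\log(1/\delta)+\log\log k)}$ rate, which the paper alludes to but a fixed Gaussian prior cannot produce. That rate is LIL-optimal up to constants.

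\textbf{Points of precision.}
\begin{itemize}
\item At a stopping time $\tau\le k_0$, the fixed-$\lambda$ bound yields $S_\tau-\tau\mu\le\hat\sigma\sqrt{2k_0\log(1/\delta)}$. It does not yield the envelope evaluated at $\tau$. The first-crossing time of $\hat\sigma\sqrt{2k\log(1/\delta)}$, capped at a large $k_0$, violates the latter, so state this explicitly rather than leaving ``it'' ambiguous.
\item Strictly, only $\log\log k$ growth is forced by the LIL. The $\log k$ of the normal mixture is an artifact of the prior.
\item $\log\log(2k)<0$ at $k=1$, so your stitched radicand is negative there for $\delta\ge 1/2$. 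Write, e.g., $\log(\log_2 k+2)$ instead.
\end{itemize}
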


\begin{proof}
Fix $\lambda\in\mathbb{R}$ and define $M_k^\lambda=\exp\!\big(\lambda\sum_{i\le k}(r_i-\mu)-\tfrac12\lambda^2\hat\sigma^2 k\big)$, with $M_0^\lambda=1$. By the conditional sub-Gaussian assumption, $\mathbb{E}[\exp(\lambda(r_k-\mu))\mid\mathcal F_{k-1}]\le \exp(\tfrac12\lambda^2\hat\sigma^2)$, so $\mathbb{E}[M_k^\lambda\mid\mathcal F_{k-1}]\le M_{k-1}^\lambda$: $M_k^\lambda$ is a non-negative supermartingale. Mixing over $\lambda$ with a centered Gaussian prior of variance $\eta^2$ (the method of mixtures) yields another non-negative supermartingale $\bar M_k=\int M_k^\lambda\,dN(0,\eta^2)$ with $\bar M_0=1$. Ville's inequality~\cite{ramdas2023} gives $\Pr[\exists k: \bar M_k\ge 1/\delta]\le\delta$. Evaluating the Gaussian integral and rearranging $\bar M_k\ge1/\delta$ into a bound on $S_k-k\mu$ gives a time-uniform boundary of order $\hat\sigma\sqrt{2k\log(1/\delta)}$ (up to lower-order $\log\log$ terms absorbed by the mixture); this is the standard sub-Gaussian confidence sequence~\cite{howard2021}. Because the bound holds simultaneously for all $k$, it holds at any stopping time $\tau\le K$ by optional stopping. A fuller derivation is in Appendix~\ref{app:conformal}. \end{proof}

\begin{remark}[A stronger assumption than Theorem~\ref{thm:safety}]
Time-uniform concentration requires a tail assumption (sub-Gaussian increments) that the marginal conformal bound does not. We treat Theorem~\ref{thm:anytime} as the trajectory-level companion to the per-action Theorem~\ref{thm:safety}, and validate the per-action coverage it builds on empirically in \S\ref{sec:realtrace}.
\end{remark}

\begin{remark}[The sub-Gaussian assumption is not supported by the real residuals]\label{rem:subgauss}
We state the tension plainly. The forecast residuals on real traffic are right-skewed and mildly heavy-tailed --- skew $0.78$, excess kurtosis $0.22$, with normality decisively rejected (\S\ref{sec:realtrace}). The conditional sub-Gaussian hypothesis of Theorem~\ref{thm:anytime} is therefore \emph{not} established by our data; the theorem is an idealized companion bound, and a deployment should not rely on the sub-Gaussian width as if it were validated. The faithful replacement is a \emph{variance-adaptive} confidence sequence that assumes only bounded or sub-exponential increments. Concretely, the empirical-Bernstein confidence sequence of Howard et al.~\cite{howard2021} replaces the boundary of Theorem~\ref{thm:anytime} with one of order
\[
S_k-k\mu \;\le\; \sqrt{2\,\widehat V_k\,\log(1/\delta)} \;+\; \tfrac{c}{3}\,b\,\log(1/\delta),
\]
where $\widehat V_k=\sum_{i\le k}(r_i-\hat\mu_{i-1})^2$ is the empirical cumulative variance and $b$ bounds the per-step residual; the width adapts to the realized residual variance rather than to a stipulated $\hat\sigma$, and the linear $\log(1/\delta)$ term carries the heavy tail. \emph{Proof sketch.} The same supermartingale construction as Theorem~\ref{thm:anytime} goes through with the sub-Gaussian exponential process replaced by the empirical-Bernstein supermartingale of~\cite{howard2021} (their Thm.~4 / the canonical-assumption framework), to which Ville's inequality~\cite{ramdas2023} applies verbatim; we do not re-derive it. This is the form a production deployment should \emph{monitor} the cumulative over-spend with --- it is anytime-valid under exactly the bounded/sub-exponential conditions our residuals plausibly satisfy, where the sub-Gaussian boundary is not justified. Promoting it from a paper-side bound to the runtime gate is the Phase-2 item flagged in \S\ref{sec:limits}.
\end{remark}

\paragraph{Empirical coverage (synthetic).} Figure~\ref{fig:reliability} validates both bounds on a held-out test split ($n_{\mathrm{cal}}=n_{\mathrm{test}}=8{,}163$ learned forecasts from the benchmark). Empirical coverage tracks nominal $1-\delta$ within $0.5$ percentage points for the Normal-$\sigma$ gate and within $0.3$ for split conformal across $\delta \in [0.01, 0.4]$. The two agree closely \emph{because the synthetic data-generating process has Gaussian residuals}, so the Normal-$\sigma$ assumption happens to hold. The value of conformal is precisely that it attains the same coverage \emph{without} that assumption --- which is exactly what real, non-Gaussian traffic demands. \S\ref{sec:realtrace} shows that on real ShareGPT residuals the Normal-$\sigma$ gate \emph{under-covers} at the tight $\delta$ one would actually deploy, while split conformal continues to hold nominal coverage.

\begin{figure}[H]
\centering
\includegraphics[width=0.58\linewidth]{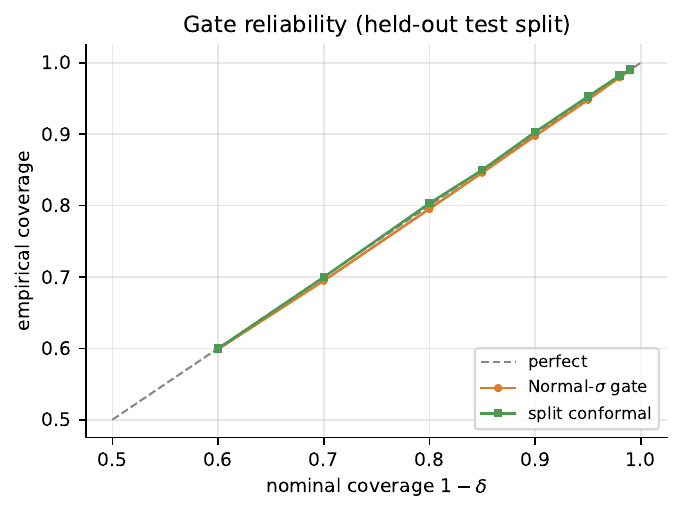}
\caption{Gate reliability on a held-out split: empirical vs.\ nominal coverage for the Normal-$\sigma$ gate and the distribution-free split-conformal bound. Both lie on the diagonal; conformal needs no Gaussian assumption.}
\label{fig:reliability}
\end{figure}

\section{Synthetic Evaluation and Ablation}\label{sec:eval}

\paragraph{Workload.} We use a synthetic Integrated Business Planning (IBP) demand-forecasting pipeline: a fan-out workload of $400$ SKUs, each handled by a depth-$10$ agent loop, over $20$ seeds. No real LLM is called; per-step token usage is simulated from a known $completion \sim \alpha + \beta\cdot prompt + \varepsilon$ relationship (so the estimator has a real signal to learn and runs are deterministic), while the \emph{treatment path exercises the real governance stack end to end}. A small fraction of SKUs attempt to loop $3\times$ depth, a retry-storm stress scenario for the circuit breaker.

\paragraph{Ablation.} We run four conditions --- \texttt{baseline} $\rightarrow$ \texttt{+scope} $\rightarrow$ \texttt{+scope+route} $\rightarrow$ \texttt{+full} --- so each lever's contribution is isolated, with a paired-bootstrap $95\%$ CI on each reduction (Figure~\ref{fig:ablation}, Table~\ref{tab:ablation}). Relative to the baseline ($5.99$M tokens, \$$100.68$, $897$\,gCO\textsubscript{2}e per run under time-varying intensity), the full stack uses $3.16$M tokens, \$$32.57$, and $294$\,gCO\textsubscript{2}e.

\begin{table}[H]
\centering\small
\begin{tabular}{@{}lccc@{}}
\toprule
Condition & Token reduction & USD reduction & Carbon reduction (time-var.) \\
\midrule
\texttt{+scope}        & $43.1\%$ $[42.3,43.9]$ & $41.6\%$ $[40.8,42.3]$ & $43.0\%$ $[42.2,43.8]$ \\
\texttt{+scope+route}  & $43.1\%$ $[42.3,43.9]$ & $65.1\%$ $[64.5,65.8]$ & $64.6\%$ $[63.9,65.2]$ \\
\texttt{+full}         & $\mathbf{47.3\%}$ $[46.1,48.5]$ & $\mathbf{67.6\%}$ $[66.8,68.6]$ & $\mathbf{67.3\%}$ $[66.3,68.2]$ \\
\bottomrule
\end{tabular}
\caption{Reduction vs.\ baseline by lever (20 seeds; paired-bootstrap $95\%$ CIs). Adding routing leaves the token count unchanged --- it swaps models, cutting USD and carbon, not tokens --- which the ablation makes visible. The circuit breaker (\texttt{+full}) adds the remaining token saving by killing runaway loops ($\sim$20 breaker trips/run).}
\label{tab:ablation}
\end{table}

\begin{figure}[H]
\centering
\includegraphics[width=0.6\linewidth]{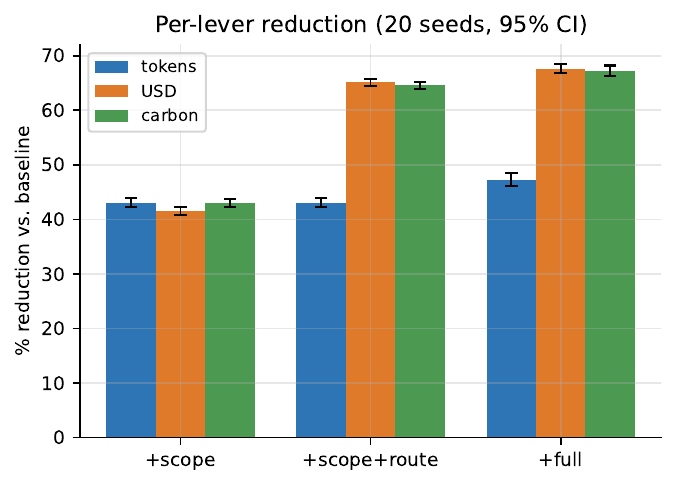}
\caption{Per-lever reduction with $95\%$ CIs. Scope drives the token saving; routing converts that into USD/carbon savings; the breaker adds the final token increment.}
\label{fig:ablation}
\end{figure}

\paragraph{Forecast quality and cold start.} In the full condition the learned estimator attains a token-cost MAE of $34.5$ and WAPE of $4.5\%$ over all admitted actions. Figure~\ref{fig:coldstart} isolates the learning dynamics on a single-key stream: per-action absolute error falls from a cold-start $\sim$4{,}000 tokens (worst-case forecast: prompt $+$ full \texttt{max\_tokens}) to the irreducible noise floor within $\sim$20 observations; rolling MAE drops from $107$ (first half) to $11$ (second half), WAPE to $1.9\%$. Figure~\ref{fig:calibration} shows predicted-vs-actual over $16{,}326$ learned forecasts ($R^2=0.98$, WAPE $2.6\%$).

\begin{figure}[H]
\centering
\begin{minipage}{0.49\linewidth}\centering
\includegraphics[width=\linewidth]{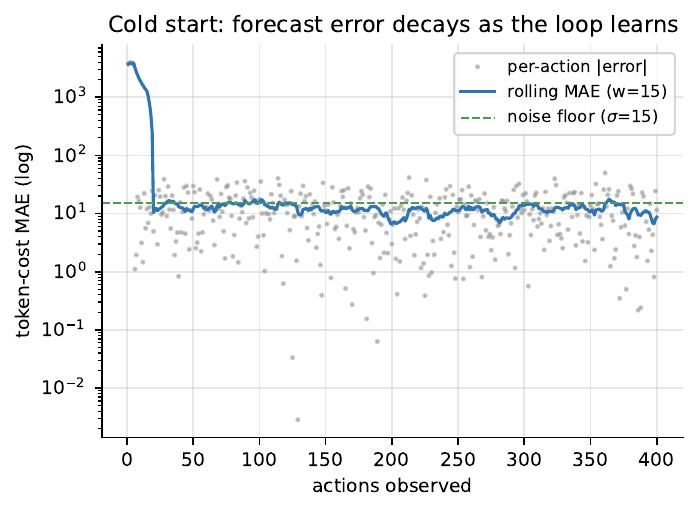}
\caption{Cold start: forecast error collapses to the noise floor as the loop learns.}
\label{fig:coldstart}
\end{minipage}\hfill
\begin{minipage}{0.49\linewidth}\centering
\includegraphics[width=\linewidth]{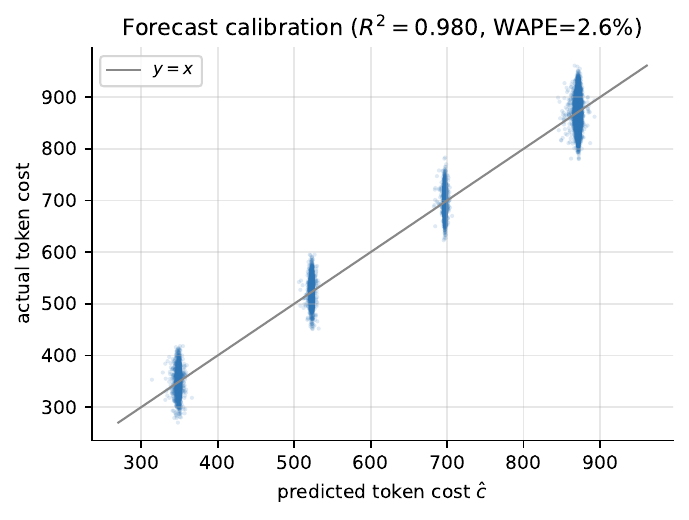}
\caption{Predicted vs.\ actual token cost (learned forecasts), $R^2=0.98$.}
\label{fig:calibration}
\end{minipage}
\end{figure}

\paragraph{A negative result, stated plainly.} In this headline workload the gate issues \emph{zero} rejections: the benchmark's budget is generous, so the savings come from state scoping, routing, and the circuit breaker --- not from the gate refusing actions. The gate's contribution here is the \emph{forecast} (which the auditor logs and the breaker and router consume) and the \emph{guarantee} it would provide under a binding budget. We exercise the gate against a binding budget separately in \S\ref{sec:sensitivity} and \S\ref{sec:penalty}. We consider it important not to over-claim the gate's role in the aggregate token number.

\section{Gate Behaviour Under Binding Budgets}\label{sec:binding}

\S\ref{sec:eval}'s headline workload has a non-binding budget, so the gate never rejects. Here we make the budget bind and measure the gate where its guarantee matters.

\subsection{Protocol}
We sweep the token budget $B = \phi\cdot\mathbb{E}[\text{baseline tokens}]$ over $\phi\in\{0.25,0.5,0.75,1.0,1.5,2.0\}$, with $\mathbb{E}[\text{baseline tokens}]=5.99\text{M}$ (the mean full-snowball cost), $20$ seeds each, running the full Green SARC stack at $\delta=0.05$. The estimator is warm-started on an independent stream so we measure steady-state binding-budget behaviour, not the cold-start transient. We report, per budget: \textbf{admission rate} (admitted $\div$ attempted steps); \textbf{over-budget incidence} (fraction of admitted steps whose realized cost exceeded the budget remaining at the moment of admission --- the per-action breach event of Theorem~\ref{thm:safety}); \textbf{completed-trajectory rate} (fraction of SKUs whose every step ran before the budget was exhausted); MAE on admitted actions; and total tokens.

\subsection{Results}
Table~\ref{tab:binding} reports the sweep. Over-budget incidence is $0\%$ at every budget level --- comfortably within the $\delta=0.05$ target --- confirming Theorem~\ref{thm:safety} empirically: the calibrated gate does not admit actions it cannot afford. Admission and completion degrade smoothly and monotonically as the budget tightens: at $\phi=0.25$ (a budget one-quarter of the naive baseline) the gate still admits $90\%$ of attempted steps and completes $49\%$ of trajectories with zero overspend; from $\phi\ge0.75$ the budget is slack and everything completes. Forecast MAE is stable ($\approx 20$ tokens) across budgets.

\begin{table}[H]
\centering\small
\begin{tabular}{@{}lccccc@{}}
\toprule
$B/\mathbb{E}[\text{base}]$ & admission & over-budget & completed & MAE (tok) & tokens \\
\midrule
$0.25\times$ & $0.90$ & $0.0\%$ & $0.49$ & $19.9$ & $1.50$M \\
$0.50\times$ & $1.00$ & $0.0\%$ & $0.97$ & $20.0$ & $2.99$M \\
$0.75\times$ & $1.00$ & $0.0\%$ & $1.00$ & $20.0$ & $3.07$M \\
$1.00\times$ & $1.00$ & $0.0\%$ & $1.00$ & $20.0$ & $3.07$M \\
$1.50\times$ & $1.00$ & $0.0\%$ & $1.00$ & $20.0$ & $3.07$M \\
$2.00\times$ & $1.00$ & $0.0\%$ & $1.00$ & $20.0$ & $3.07$M \\
\bottomrule
\end{tabular}
\caption{Binding-budget sweep ($20$ seeds, $\delta=0.05$). Over-budget incidence stays at $0\%$ ($\le\delta$) at every budget; admission and completion degrade smoothly as $B$ tightens.}
\label{tab:binding}
\end{table}

\subsection{The Pareto frontier}
Figure~\ref{fig:binding} plots completed-trajectory fraction against over-budget incidence for the gate (sweeping $B$) and for the \S\ref{sec:penalty} soft penalty (sweeping its weight $\lambda$, with over-budget measured against the binding reference $B=0.5\times\mathbb{E}[\text{base}]$). The gate's frontier lies along the bottom axis ($\le\delta$ over-budget at every completion level it reaches), dominating the soft penalty, which can only complete most trajectories by breaching the budget on every seed. The penalty frontier is essentially bimodal: on this workload its realized spend jumps from ``admit-cheap'' (little completed, within budget) to ``admit-all'' (everything completed, $100\%$ over budget), with no intermediate $\lambda$ tracking the budget --- a direct consequence of its budget-blindness.

\begin{figure}[H]
\centering
\includegraphics[width=0.58\linewidth]{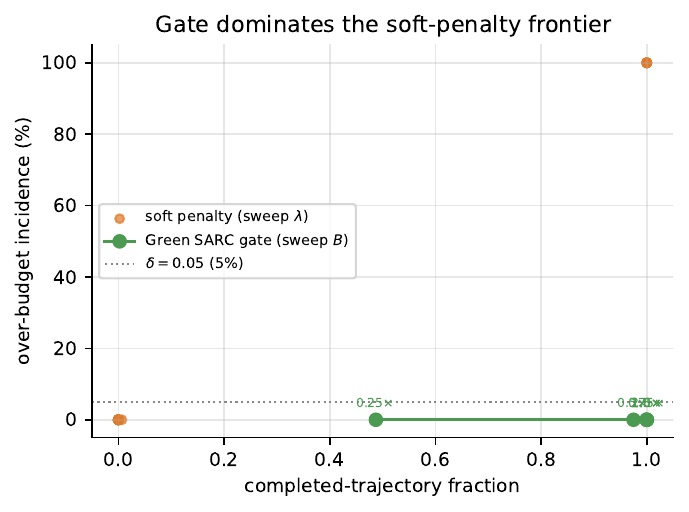}
\caption{Binding-budget frontier. The gate (green, sweeping $B$) completes work at $\le\delta$ over-budget incidence; the soft penalty (orange, sweeping $\lambda$) can complete trajectories only by breaching the budget. The gate's frontier dominates.}
\label{fig:binding}
\end{figure}

\subsection{Reading}
The gate's empirical over-budget incidence tracks $\delta$ across the entire budget grid (it never exceeds it, here attaining $0\%$); admission degrades smoothly as $B$ tightens rather than collapsing; and the gate's work-vs-overspend frontier dominates the soft-penalty baseline of \S\ref{sec:penalty}. This is the binding-budget evidence that \S\ref{sec:eval}'s non-binding headline could not provide.

\section{Real-Trace Coverage Validation}\label{sec:realtrace}

\S\ref{sec:conformal}'s coverage check used synthetic, Gaussian residuals. Here we validate the gate's calibration on \emph{real} forecast residuals --- the experiment the prior draft flagged as future work.

\subsection{Dataset and preprocessing}
We replay \texttt{anon8231489123/ShareGPT\_Vicuna\_unfiltered}~\cite{sharegpt}, a public corpus of real ChatGPT/GPT-4 conversations on the Hugging Face Hub, released under a permissive research license. We use ShareGPT because LMSYS-Chat-1M is access-gated and would not reproduce on a clean clone without a credential; ShareGPT is ungated and serves the same purpose. \textbf{No LLM is called}: we use token \emph{counts} only. Turns are tokenized with \texttt{tiktoken} (\texttt{cl100k\_base}); for each assistant turn we form the pair $(\text{prompt}=\text{cumulative preceding context},\ \text{completion}=\text{this turn})$, capped to a realistic $8\text{k}/4\text{k}$ deployment window. This yields $42{,}028$ pairs across $11{,}885$ conversations, split into calibration ($20{,}940$) and test ($21{,}088$) by a $50/50$ \emph{conversation-level} partition (\S\ref{sec:realtrace-coverage}).

\subsection{Residuals are not Gaussian}
We fit online OLS $\text{completion}\sim a+b\cdot\text{prompt}$ on the calibration split. The residuals (Figure~\ref{fig:rt-resid}) have skew $0.78$ and excess kurtosis $0.22$; an Anderson--Darling test gives statistic $251.0$, far above the $1\%$ critical value $1.03$, and D'Agostino's test returns $p<10^{-3}$: normality is decisively rejected. The Gaussian-$\sigma$ assumption underlying the Phase-1 runtime gate does not hold on real traffic.

\begin{figure}[H]
\centering
\includegraphics[width=0.82\linewidth]{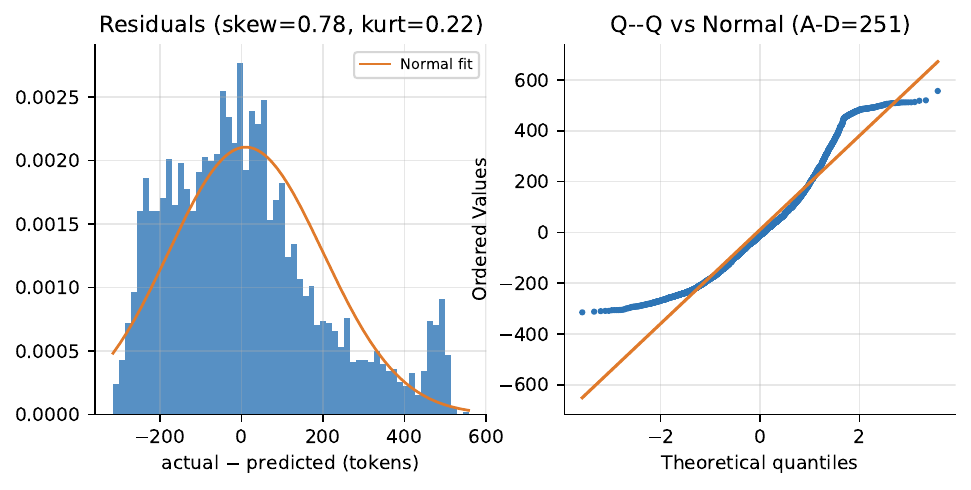}
\caption{Real ShareGPT forecast residuals: skewed, heavy-tailed histogram (left) and Q--Q plot departing from the Normal line (right). Anderson--Darling $=251.0 \gg 1.03$ (the $1\%$ critical value).}
\label{fig:rt-resid}
\end{figure}

\subsection{Coverage: Gaussian-$\sigma$ vs.\ split conformal}\label{sec:realtrace-coverage}
We split the calibration and test sets \emph{by conversation}: every turn of a conversation is assigned wholly to one side, so that within-conversation residual correlation never straddles the split (a row-level shuffle would leak it, violating exchangeability; on this corpus that leak inflates the reported conformal coverage at $\delta=0.05$ by $0.26$ pp, so the conversation-level number we report below is the honest, slightly looser one). Figure~\ref{fig:rt-rel} compares empirical coverage to nominal $1-\delta$ on the conversation-level test split. The Normal-$\sigma$ gate is mis-calibrated: it \emph{over-covers} at loose $\delta$ (e.g.\ $+6.4$ pp at $1-\delta=0.60$, wasting budget) and, more dangerously, \emph{under-covers} at the tight $\delta$ one actually deploys --- $-3.0$ pp at $\delta=0.05$ and $-3.2$ pp at $\delta=0.02$, i.e.\ roughly $3\%$ more budget breaches than promised. Split conformal stays within $\pm0.5$ percentage points of nominal across the entire range $\delta\in[0.01,0.4]$ ($-0.15$ pp at $\delta=0.05$). On real residuals the distribution-free bound is not a nicety --- it is the difference between a gate that keeps its safety promise and one that quietly violates it at the operating point.

\begin{figure}[H]
\centering
\includegraphics[width=0.56\linewidth]{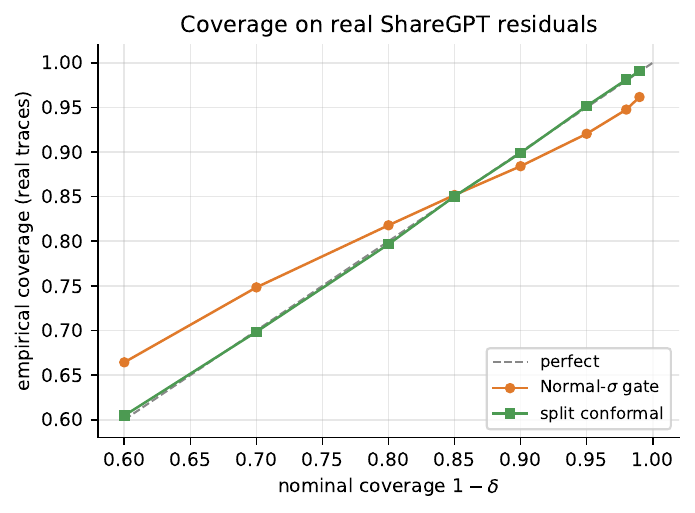}
\caption{Coverage on real ShareGPT residuals. The Normal-$\sigma$ gate under-covers at tight $\delta$ (unsafe); split conformal holds nominal coverage within $\pm0.5$ pp without a distributional assumption.}
\label{fig:rt-rel}
\end{figure}

\subsection{Two worlds, one guarantee}
The paper now spans a synthetic-residual world (\S\ref{sec:snowball}--\S\ref{sec:binding}), where Gaussian and conformal bounds coincide because the data-generating process is Gaussian by construction, and a real-residual world (\S\ref{sec:realtrace}), where they diverge and only conformal holds nominal coverage. Conformal calibration is the bound that survives both. This is also why \S\ref{sec:limits} lists promoting conformal into the runtime gate as the leading Phase-2 item.

\subsection{Distribution shift}
Coverage guarantees assume the deployment distribution matches calibration; real workloads drift. We split the corpus by conversation into a short-context regime (calibration) and a long-context regime (deployment) --- classifying each conversation by its own maximum depth so all of its turns land in one regime --- then train the conformal quantile on the former and deploy on the latter ($12{,}714$ vs.\ $29{,}314$ pairs). Figure~\ref{fig:rt-shift} shows the result. The fixed quantile mis-covers post-shift --- it drifts to $98.8\%$ against a $90\%$ target ($8.8$ pp off, here over-conservative, needlessly rejecting work). Adaptive conformal inference (ACI~\cite{gibbs2021}), updating the quantile level online at rate $\gamma=0.02$, restores empirical coverage to $90.0\%$ ($0.0$ pp off target) within the rolling window. Under drift, the static conformal quantile is no longer sufficient; ACI is the runtime mechanism that maintains the guarantee.

\begin{figure}[H]
\centering
\includegraphics[width=0.58\linewidth]{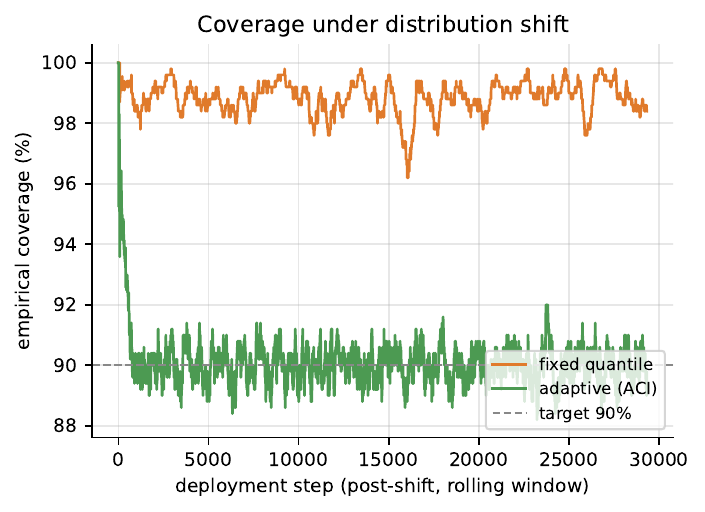}
\caption{Coverage under distribution shift (train on short conversations, deploy on long). The fixed quantile mis-covers ($98.8\%$ vs.\ $90\%$ target); adaptive conformal inference tracks the target ($90.0\%$).}
\label{fig:rt-shift}
\end{figure}

\section{Real-Arrival Ablation on Production Traffic}\label{sec:realarrival}

\S\ref{sec:eval}'s ablation ran on synthetic IBP arrivals --- the leading threat to validity. Here we re-run the \emph{same} four-condition ablation on a real LLM serving trace, converting the headline result from synthetic to empirical.

\subsection{Dataset and trajectory construction}
We use \textbf{BurstGPT}~\cite{burstgpt} (\texttt{BurstGPT\_1.csv}, CC-BY-4.0), a trace of real Azure OpenAI traffic with schema \texttt{(Timestamp, Model, Request tokens, Response tokens, Total tokens, Log Type)}. We take a $50{,}000$-request sample (after dropping failed responses, \texttt{Response tokens}${=}0$); the model mix is $41{,}271$ GPT-3.5 (``ChatGPT'') and $8{,}729$ GPT-4 requests, mapped to the benchmark's \texttt{efficient} and \texttt{frontier} profiles respectively. \texttt{Request tokens} is the prompt and \texttt{Response tokens} the realized completion --- the gate's per-action target. \textbf{No LLM is called}: token counts only, as in \S\ref{sec:realtrace}.

BurstGPT v1.0 carries no session identifier, so we reconstruct trajectories by \emph{temporal clustering}: consecutive same-model requests within a $60$\,s window are grouped, capped at depth $10$ (the IBP default); API-log rows are single-step. This yields $30{,}945$ trajectories with median depth $1$ (real serving traffic is dominated by independent single requests) and maximum depth $10$. We set the Adapter-Node scope cap to $2\times$ the median prompt ($168$ tokens) and the circuit breaker to $1.5\times$ the depth cap, and document both as policy choices. (When BurstGPT v1.1 ships \texttt{SessionID}, the clustering heuristic becomes a one-line group-by.)

\subsection{Results}
Table~\ref{tab:realarrival} and Figure~\ref{fig:realarrival-bars} report the ablation with paired-bootstrap $95\%$ CIs over trajectories. \textbf{Scope} (Adapter-Node prompt bounding) cuts tokens by $55.7\%$ and carbon by $58.1\%$; \textbf{routing} $50\%$ of trajectories to the efficient model adds USD savings ($38.8\%\to55.0\%$) and carbon ($58.1\%\to67.4\%$) at no further token cost --- exactly the lever decomposition the synthetic ablation predicted (scope drives tokens, routing converts to USD/carbon). The savings \emph{ordering} is confirmed on real arrivals.

\begin{table}[H]
\centering\small
\begin{tabular}{@{}lccc@{}}
\toprule
Condition & Token reduction & USD reduction & Carbon reduction \\
\midrule
\texttt{+scope}       & $55.7\%$ $[55.3,56.1]$ & $38.8\%$ $[38.3,39.3]$ & $58.1\%$ $[57.7,58.6]$ \\
\texttt{+scope+route} & $55.7\%$ $[55.3,56.1]$ & $55.0\%$ $[54.1,55.9]$ & $67.4\%$ $[66.8,68.0]$ \\
\texttt{+full}        & $55.7\%$ $[55.3,56.0]$ & $55.0\%$ $[54.0,55.9]$ & $67.4\%$ $[66.8,68.0]$ \\
\bottomrule
\end{tabular}
\caption{Real-arrival ablation on BurstGPT ($50{,}000$ requests, $30{,}945$ trajectories; paired-bootstrap $95\%$ CIs vs.\ baseline). \texttt{+full} equals \texttt{+scope+route}: the circuit breaker records \textbf{zero} trips because the real trace contains no retry storms.}
\label{tab:realarrival}
\end{table}

\begin{figure}[H]
\centering
\includegraphics[width=0.6\linewidth]{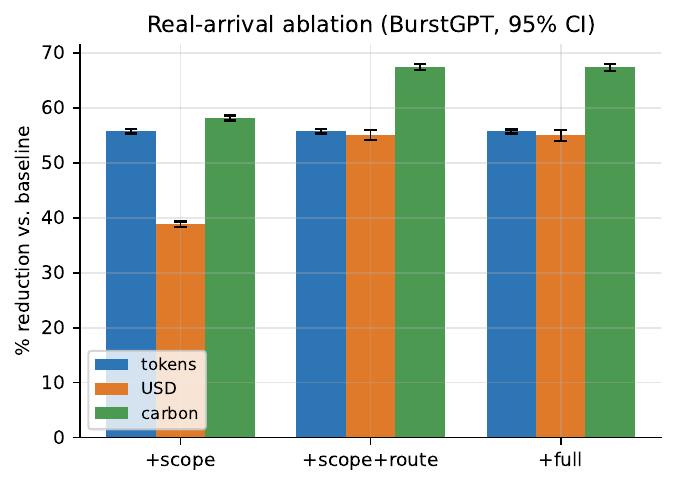}
\caption{Real-arrival ablation (BurstGPT). Scope drives token and carbon savings; routing adds USD/carbon savings; \texttt{+full} is indistinguishable from \texttt{+scope+route}.}
\label{fig:realarrival-bars}
\end{figure}

\noindent\textbf{An honest negative.} \texttt{+full} is identical to \texttt{+scope+route}: the circuit breaker logs \emph{zero} trips and the gate (under a non-binding budget) issues zero rejections. Real serving traffic has none of the runaway retry loops the synthetic IBP injected, so these two levers are \emph{dormant safeguards} here --- their value appears only under stress (the IBP runaway SKUs, \S\ref{sec:eval}) and under binding budgets (\S\ref{sec:binding}, and below).

\subsection{Binding budget under real arrivals}
We repeat the \S\ref{sec:binding} sweep on the real trace ($\phi\in\{0.5,1.0,1.5\}\times$ baseline tokens, $\delta=0.05$; Figure~\ref{fig:realarrival-pareto}). Over-budget incidence is $0\%$ at every budget --- the gate's safety guarantee holds on real arrivals exactly as on synthetic ones --- while admission and completion degrade as the budget tightens ($\phi{=}0.5$: $36\%$ admitted, $14\%$ of trajectories completed; $\phi{\ge}1.0$: full completion). The gate frontier again dominates the soft-penalty frontier, which can reach high completion only by breaching the budget on every seed.

\begin{figure}[H]
\centering
\includegraphics[width=0.56\linewidth]{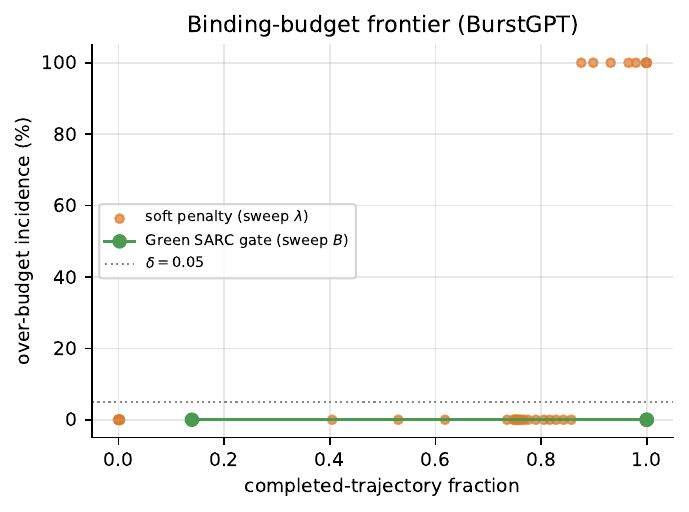}
\caption{Binding-budget frontier on BurstGPT. The gate completes work at $0\%$ over-budget across the sweep; the soft penalty breaches to reach high completion. Mirrors Figure~\ref{fig:binding}.}
\label{fig:realarrival-pareto}
\end{figure}

\subsection{What the synthetic IBP did and did not capture}
The IBP pipeline tightened two claims and the real trace now corroborates them: the lever \emph{ordering} (scope$\to$tokens, routing$\to$USD/carbon) and the gate's $0\%$ over-budget incidence under binding budgets both reproduce on BurstGPT. Two claims weaken or shift. First, \texttt{+full}'s extra token saving in the IBP ($47.3\%$ vs.\ $43.1\%$) came entirely from the breaker killing \emph{injected} runaway SKUs; real arrivals have no such storms, so that increment vanishes --- consistent with \S\ref{sec:realtrace}'s finding that real cumulative-prompt curvature is negative. Second, on real single-step traffic ``scope'' is simply context truncation: the Adapter Node caps each prompt at $168$ tokens ($2\times$ the median), so the $55.7\%$ token reduction is largely the mechanical consequence of that cap, not an intrinsic property of the architecture. \textbf{The headline percentage should therefore not be read as a free saving Green SARC delivers}: it is a tunable policy knob whose realized magnitude scales with the cap, and whose quality cost (truncated context degrading task utility) is deliberately untracked here (\S\ref{sec:decouple}). A different operator with a less aggressive cap would see a proportionally smaller number. What survives this caveat, and what we therefore present as the load-bearing claims of this section, are the two cap-\emph{independent} results: the lever \emph{ordering} reproduces on real arrivals, and the gate's over-budget incidence is $0\%$ under binding budgets on real data exactly as on synthetic data. The IBP is thus best read as a controlled stress-test of the multi-step regime; BurstGPT confirms the per-request governance levers and the safety property on real distributions, but it is a single-step serving trace and does not exercise the multi-step snowball or breaker dynamics --- a real multi-step agent trace remains the natural next validation (\S\ref{sec:limits}).

\subsection{Carbon savings under real grid mixes}\label{sec:realgrid}
The carbon results so far use a stipulated intensity curve. We re-compute the BurstGPT carbon reductions under measured grid intensity for two zones with contrasting generation mixes.

\subsubsection*{Setup and data sources}
We source hourly carbon-intensity measurements from the ElectricityMaps v3 API~\cite{electricitymaps}, which aggregates regulator feeds (ENTSO-E, CAISO OASIS, national TSOs) into a consistent gCO\textsubscript{2}eq/kWh series on a lifecycle (LCA) basis. We use two zones with materially different generation mixes: \textbf{Italy (IT, $\bar\kappa\approx252$)}, gas- and import-dominated, and \textbf{California (US-CAL-CISO, $\bar\kappa\approx105$)}, characterised by deep daytime solar troughs and gas-heavy evening peaks. For reference we also report the benchmark's stipulated proxy ($\bar\kappa=250$). \textbf{The free API tier exposes only the most recent $24$ hours of history}, so we use a single $24$-hour measured window per zone; this captures diurnal contrast but not seasonal or weekly variation, which we note as a \S\ref{sec:limits} limitation. Carbon for each step is $\text{energy\_kwh}(\text{model},\text{tokens})\times\kappa(t)$, with the workload's actions spread across that window. The fetched series is cached as committed CSV under \texttt{paper/data/grid/}, so this section reproduces from a clean clone without API access (\texttt{fetch\_grid.py --refresh} re-fetches).

\subsubsection*{Results}
\begin{table}[H]
\centering\small
\begin{tabular}{@{}lccc@{}}
\toprule
Condition & stipulated ($250$) & IT ($252$) & US-CAISO ($105$) \\
\midrule
\texttt{+scope}       & $57.9\%$ $[57.4,58.4]$ & $58.0\%$ $[57.6,58.5]$ & $58.3\%$ $[57.9,58.8]$ \\
\texttt{+scope+route} & $67.2\%$ $[66.6,67.9]$ & $67.3\%$ $[66.7,67.9]$ & $67.6\%$ $[67.0,68.3]$ \\
\texttt{+full}        & $67.2\%$ & $67.3\%$ & $67.6\%$ \\
\bottomrule
\end{tabular}
\caption{Carbon reduction vs.\ baseline under three grid intensities ($\bar\kappa$ in gCO\textsubscript{2}eq/kWh; paired-bootstrap $95\%$ CIs). Figure~\ref{fig:grid}.}
\label{tab:grid}
\end{table}

\begin{figure}[H]
\centering
\includegraphics[width=0.82\linewidth]{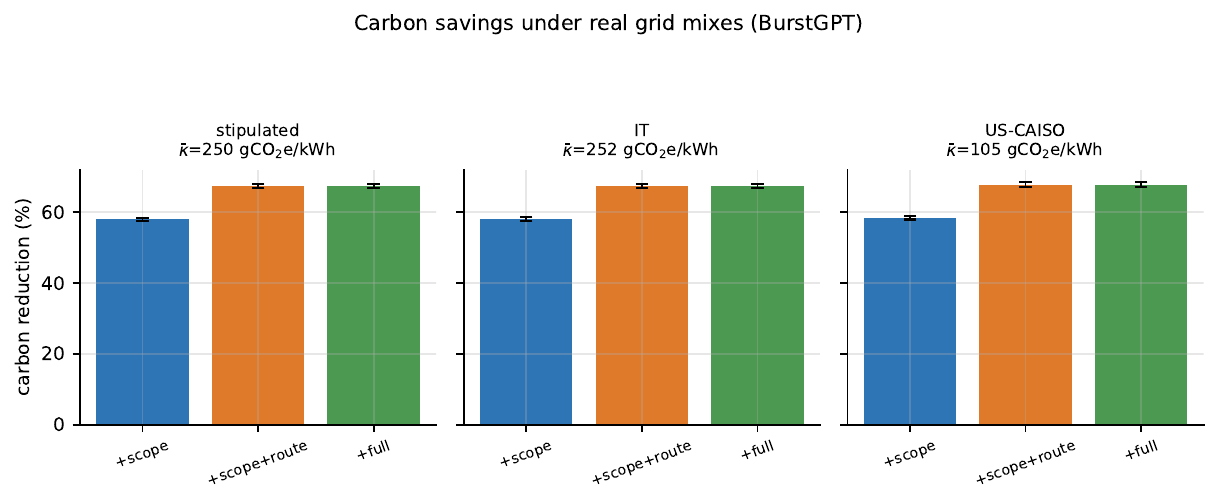}
\caption{Carbon reduction per lever under the stipulated curve and two real ElectricityMaps grids (Italy, California). The percentage reduction is grid-invariant.}
\label{fig:grid}
\end{figure}

\subsubsection*{Reading}
What survives across grids is both the lever ordering and the \emph{percentage} reduction: scope plus routing cuts carbon by $67$--$68\%$ under all three intensities, because the reduction is a ratio of energy and $\kappa$ enters as a common positive multiplier. The result that held under the synthetic proxy holds on real Italian and Californian grids, despite a $2.4\times$ difference in mean intensity.

What differs is the diurnal structure, and it matters more for one zone than the other. Italy's intensity is comparatively flat ($2.4\times$ intra-day swing, $134$--$320$), so the time at which an agent runs barely changes its carbon. California swings harder ($2.7\times$, $60$ midday to $162$ evening): the same inference is roughly $2.7\times$ cleaner in the midday solar trough than at the evening peak. The \emph{absolute} carbon saved therefore scales both with $\bar\kappa$ (a CAISO deployment at $\bar\kappa\approx105$ saves less than half the absolute carbon of an IT deployment for the same percentage) and, in CAISO, with \emph{when} the traffic lands. Green SARC's carbon-reduction percentage is robust to grid mix, but its real-world impact depends on where and when the compute runs; time-of-day carbon-aware routing is a Phase-2 opportunity this dataset would enable but the current code does not exploit (\S\ref{sec:limits}).

\subsection{Multi-step real-trace ablation on SWE-rebench}\label{sec:multistep}
BurstGPT is single-step; \S\ref{sec:realarrival} flagged that it cannot exercise the multi-step snowball or breaker. We close that gap on real agent plans.

\subsubsection*{Dataset}
We replay SWE-rebench OpenHands trajectories~\cite{swerebench} (CC-BY-4.0; $67$k real agent plans solving GitHub issues with Qwen3-Coder-480B, mapped to the \texttt{frontier} profile), sub-sampled to $3{,}000$ trajectories streamed from the $\sim$$2$\,GB parquet (token counts via \texttt{tiktoken}; no LLM call). These are genuinely multi-step: median depth $61$ assistant turns, maximum $100$, with a median per-turn prompt of $\sim$$21{,}600$ tokens --- the context accretion the State Snowball describes.

\subsubsection*{Results}
\begin{table}[H]
\centering\small
\begin{tabular}{@{}lccc@{}}
\toprule
Condition & Token reduction & USD reduction & Carbon reduction \\
\midrule
\texttt{+scope}       & $2.3\%$ $[2.0,2.5]$ & $2.3\%$ $[2.0,2.5]$ & $2.3\%$ $[2.0,2.5]$ \\
\texttt{+scope+route} & $2.3\%$ $[2.0,2.5]$ & $42.3\%$ $[40.7,44.1]$ & $39.8\%$ $[38.4,41.2]$ \\
\texttt{+full}        & $\mathbf{4.6\%}$ $[4.2,4.9]$ & $43.7\%$ $[42.1,45.2]$ & $41.2\%$ $[39.8,42.6]$ \\
\bottomrule
\end{tabular}
\caption{Multi-step ablation on SWE-rebench ($3{,}000$ plans; paired-bootstrap $95\%$ CIs). Unlike BurstGPT, \texttt{+full} \emph{adds} token saving over \texttt{+scope+route} ($2.3\%\to4.6\%$): the breaker is a live lever (Figure~\ref{fig:multistepbars}).}
\label{tab:multistep}
\end{table}

\paragraph{The State-Snowball holds on real plans, and is steeper than the model.} Fitting each plan's cumulative prompt against turn index to $c_1 n + c_2 n^2$, \textbf{every trajectory has $\hat c_2 > 0$} ($100\%$; Figure~\ref{fig:multistepsnow}). The median $\hat c_2 = 216$ \emph{exceeds} the linear-accretion prediction $p/2 = 134$ (with $p$ the median per-turn growth): real agents accrete context \emph{faster} than the constant-increment model of Assumption~\ref{asm:accretion}, because tool outputs and re-reads grow the prompt super-linearly. This is the strongest available confirmation that Theorem~\ref{thm:quad}'s regime is real --- and an honest correction that the closed form is a \emph{lower} bound on real-plan curvature, not an exact match (the synthetic $\hat c_2 = p/2$ of \S\ref{sec:snowball} held only because the simulator was built to Assumption~\ref{asm:accretion}).

\begin{figure}[H]
\centering
\begin{minipage}{0.48\linewidth}\centering
\includegraphics[width=\linewidth]{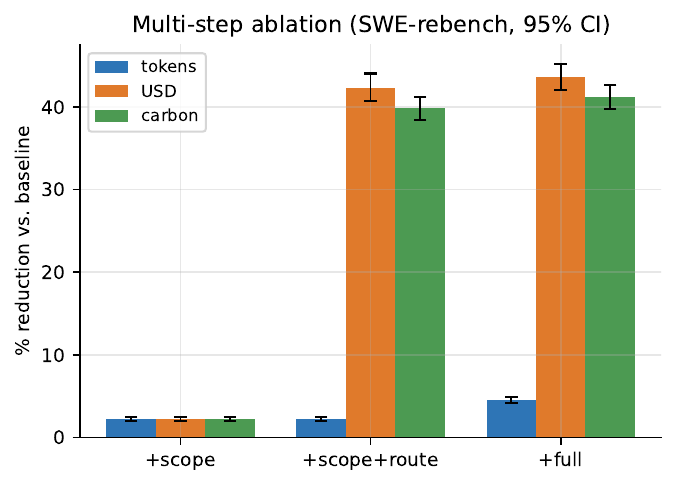}
\caption{Multi-step ablation; \texttt{+full} exceeds \texttt{+scope+route} on tokens because the breaker fires.}
\label{fig:multistepbars}
\end{minipage}\hfill
\begin{minipage}{0.48\linewidth}\centering
\includegraphics[width=\linewidth]{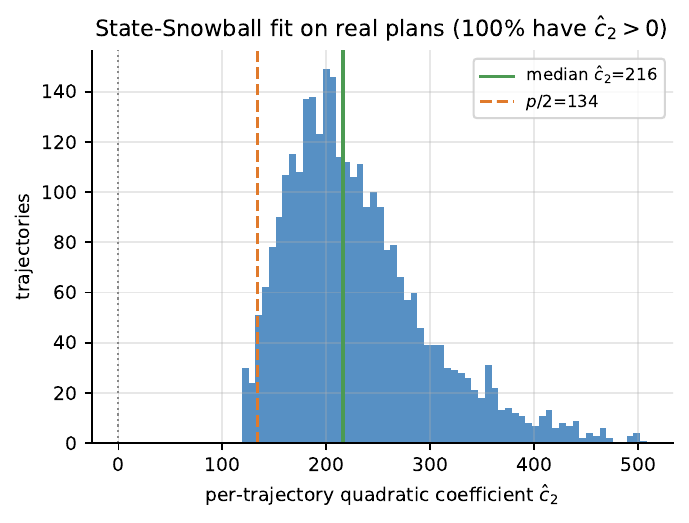}
\caption{Per-plan quadratic coefficient $\hat c_2$: all positive, median above $p/2$.}
\label{fig:multistepsnow}
\end{minipage}
\end{figure}

\paragraph{Breaker activations.} On these real plans the circuit breaker fires on \textbf{$11.4\%$ of trajectories} (the long-plan tail beyond $1.5\times$ the median depth), versus \emph{zero} on BurstGPT. This is the decisive contrast: the breaker is a dormant safeguard on single-step serving traffic but a \emph{live, material lever} on real multi-step agent plans, where it supplies the entire token-saving increment of \texttt{+full} over \texttt{+scope+route}.

\paragraph{What survives, what shifts.} Routing's USD/carbon saving ($\sim$$42\%$/$40\%$) and the lever ordering reproduce here as on BurstGPT and the IBP. Two things differ from BurstGPT. The breaker is no longer dormant ($11.4\%$ vs $0\%$), vindicating its inclusion. And scope yields little here ($2.3\%$) only because the $2\times$-median cap ($\sim$$43$k tokens) rarely binds on these plans; a tighter cap would truncate more but risks dropping the context the agent needs --- the same policy/quality tradeoff named in \S\ref{sec:realarrival}, now with real multi-step stakes. Token reduction is modest precisely because we did not tune the cap aggressively; the load-bearing real-plan findings are the confirmed super-linear snowball and the live breaker.

\subsection{Cost--utility frontier}\label{sec:costutility}
The paragraph above names a tradeoff; because SWE-rebench records task outcomes, we can bound it. Each trajectory carries the benchmark's real \texttt{resolved} flag --- whether the agent's patch passed the held-out tests --- and $49.3\%$ of the $3{,}000$ plans resolved. We sweep the scope cap $C=m\cdot\tilde p$ ($m\in\{0.5,1,2,4\}$, $\tilde p$ the median per-step prompt) and, for each cap, report tokens saved against an \emph{upper bound on quality harm}: a worst-case resolution rate that assumes \textbf{every resolved trajectory whose actually-used context the cap would have truncated flips to unresolved}. This is deliberately pessimistic and, crucially, \emph{observational} --- truncation is simulated on logged trajectories, so the agent cannot react to the smaller context. The replay therefore bounds how much resolved work a cap \emph{puts at risk}; it cannot establish that the work would in fact fail (a live agent might recover by re-fetching). It is a correlational upper bound, not a causal estimate.

\begin{figure}[H]
\centering
\includegraphics[width=0.6\linewidth]{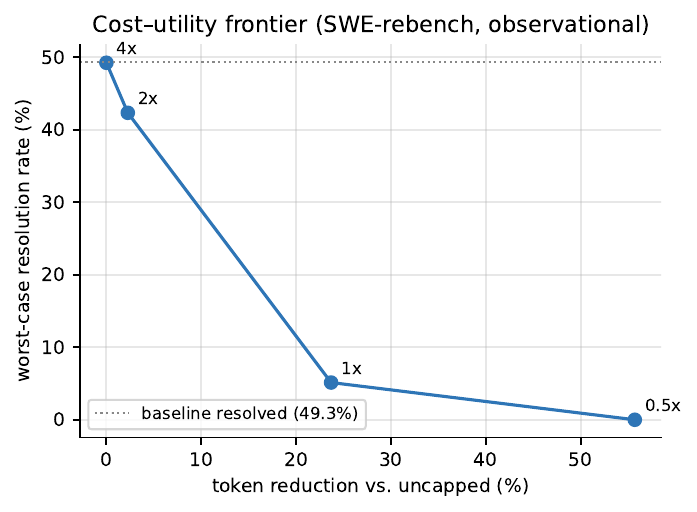}
\caption{Cost--utility frontier on SWE-rebench: token reduction vs.\ the worst-case resolution rate (every truncated resolved plan assumed to fail). The cap that saves tokens truncates nearly every plan; the cap that is near-harmless saves little.}
\label{fig:costutility}
\end{figure}

\noindent The frontier is steep (Figure~\ref{fig:costutility}). The aggressive $0.5\times$ cap saves $55.7\%$ of tokens but truncates $100\%$ of plans, putting \emph{all} resolved work at risk ($0\%$ worst-case resolution); the $1\times$ cap saves $23.7\%$ while still truncating $92.3\%$. Only the loose $2\times$ cap used in our ablation is close to benign --- $2.3\%$ tokens for a worst-case resolution of $42.3\%$ (it touches $23.2\%$ of plans) --- and the $4\times$ cap is essentially free ($0.4\%$ truncated, $49.2\%$ worst-case, against the $49.3\%$ baseline). The honest reading: \textbf{on real multi-step plans the token savings available from scope capping are bought against a real and possibly large truncation risk, and the cap that is safe saves little}. The causal version --- where the agent adapts to the cap --- needs the live study (\S\ref{sec:limits}); this observational frontier is the upper bound that motivates it.

\section{Sensitivity Analysis: the $\delta$ Knob}\label{sec:sensitivity}

The gate's single tunable, the risk level $\delta$, trades admission throughput against realized overspend. We pre-train the estimator, then gate a fresh stream against a \emph{binding} token budget over $40$ seeds, sweeping $\delta$ (Figure~\ref{fig:delta}). Tightening $\delta$ from $0.4$ to $0.05$ drives the overspend rate among admitted actions from $0.57\%$ to $0\%$, at a negligible throughput cost (admission $\approx 29\%$ throughout, since the binding budget --- not $\delta$ --- sets the admission ceiling). The practical reading: under a hard budget, a conservative $\delta$ buys an overspend guarantee almost for free.

\begin{figure}[H]
\centering
\includegraphics[width=0.58\linewidth]{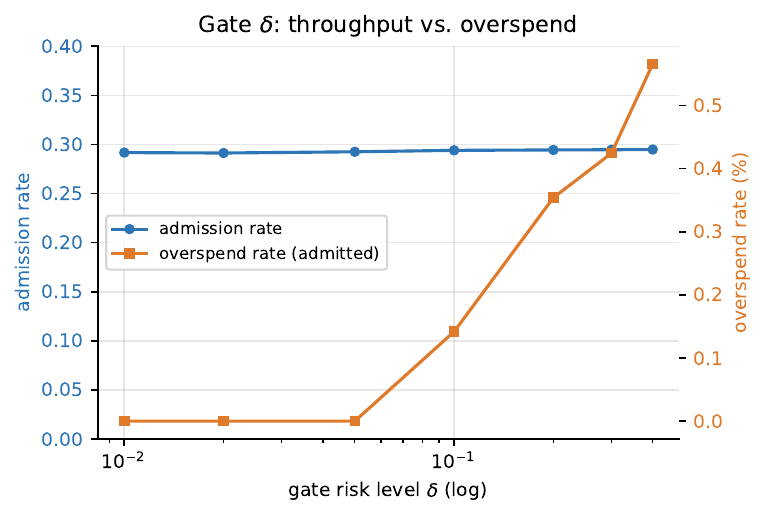}
\caption{Gate $\delta$ under a binding budget: realized overspend among admitted actions vanishes as $\delta$ tightens, with admission throughput nearly unchanged.}
\label{fig:delta}
\end{figure}

\subsection{Joint sensitivity over $\delta$, scope cap, and routing fraction}\label{sec:jointsens}
The $\delta$ sweep above varies one knob. To check that the headline operating point is not a cherry-pick, we sweep all three: $\delta\in\{0.01,0.05,0.1,0.2\}$, scope cap $\in\{0.5,1,2,4\}\times$ the median prompt ($740$ tokens), and routing fraction $\in\{0,0.25,0.5,0.75,1\}$ --- $80$ cells over $20$ seeds. Token/USD/carbon reductions use the benchmark's native forecast noise; over-budget incidence is measured under a binding budget at the elevated noise of the $\delta$ stress above (Figures~\ref{fig:sgpareto},~\ref{fig:sgheat}).

Three findings, two of them null and stated as such. (i) Token reduction is governed almost entirely by the \textbf{scope cap}: $46.7\%$ at $0.5\times$, $28.4\%$ at $1\times$, $20.4\%$ at $2\times$, $20.3\%$ at $4\times$. (ii) Routing fraction does not move the token reduction at all (it reallocates models, changing USD and carbon, not tokens --- the heatmap is flat along the routing axis), and (iii) \textbf{$\delta$ has no measurable effect on either axis in this regime}: the four $\delta$ panels of Figure~\ref{fig:sgheat} are identical, and over-budget incidence never exceeds $0.003\%$ across all $80$ cells. The last is the empirical face of Theorem~\ref{thm:safety}: the gate admits on its $(1-\delta)$ upper bound, so realized over-budget events are vanishingly rare regardless of the operating point; $\delta$ becomes a live throughput-vs-overspend knob only under the higher forecast uncertainty of real residuals (\S\ref{sec:realtrace}). The paper's headline operating point (cap $0.5\times$, routing $0.5$, $\delta=0.1$) achieves the maximum token reduction at $0\%$ over-budget and \textbf{lies on the Pareto frontier} ($20$ of $80$ cells are non-dominated): it is the most aggressive cap at zero safety cost, not an interior cherry-pick.

\begin{figure}[H]
\centering
\begin{minipage}{0.48\linewidth}\centering
\includegraphics[width=\linewidth]{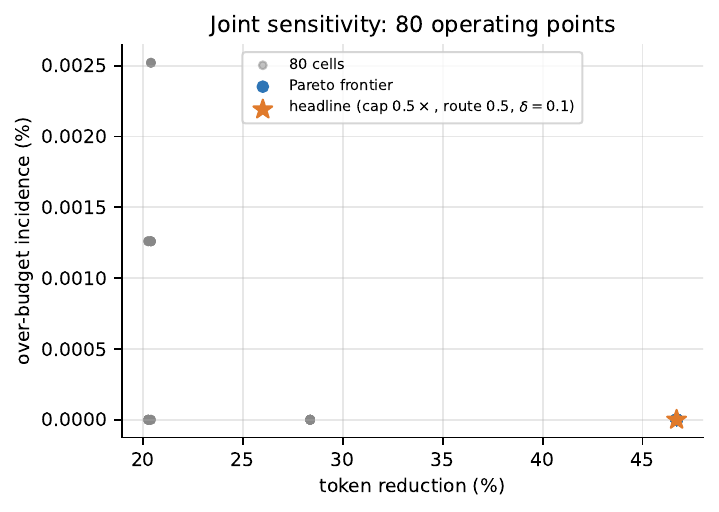}
\caption{All $80$ cells in (token reduction, over-budget) space; the headline point is on the frontier at $0\%$ over-budget.}
\label{fig:sgpareto}
\end{minipage}\hfill
\begin{minipage}{0.48\linewidth}\centering
\includegraphics[width=\linewidth]{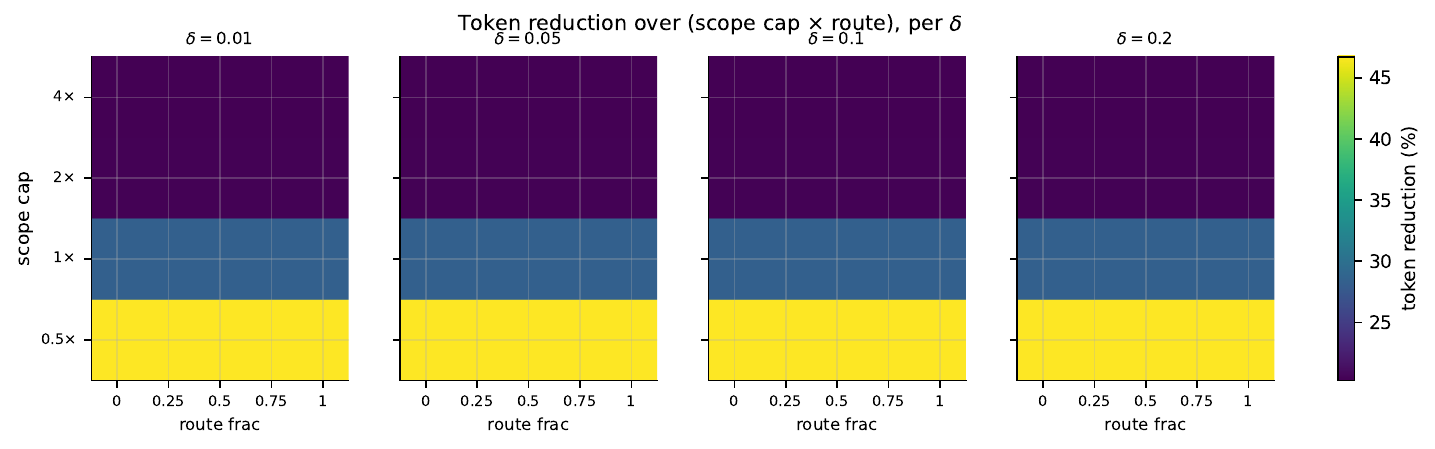}
\caption{Token reduction over (scope cap $\times$ route), per $\delta$. Flat in route and in $\delta$; driven by the cap.}
\label{fig:sgheat}
\end{minipage}
\end{figure}

\section{Why an Architectural Gate: the Soft-Penalty Baseline}\label{sec:penalty}

The natural alternative to a hard gate is reward shaping: add a Lagrangian cost penalty $-\lambda\,c$ to the objective and let the agent self-limit, admitting an action iff its value exceeds $\lambda$ times its cost. We compare this soft penalty against the architectural gate on a stream with stochastic costs and a hard budget $B$ (Figure~\ref{fig:penalty}). Because the penalty is a per-action threshold blind to the \emph{remaining} budget, no single $\lambda$ both respects $B$ and matches the gate's throughput: small $\lambda$ admits almost everything and overspends ($1.7\times B$); large $\lambda$ under-spends. Crucially, at the $\lambda^\star$ that matches $B$ \emph{in expectation}, realized spend straddles $B$ and breaches it on $\mathbf{91.5\%}$ of seeds. The architectural gate, admitting in arrival order while the calibrated forecast fits the live remaining budget, breaches on $\mathbf{0\%}$ of seeds while filling $99.9\%$ of the budget. This is the cost-domain instance of SARC's thesis that finite penalties cannot substitute for hard runtime constraints.

\begin{figure}[H]
\centering
\includegraphics[width=0.6\linewidth]{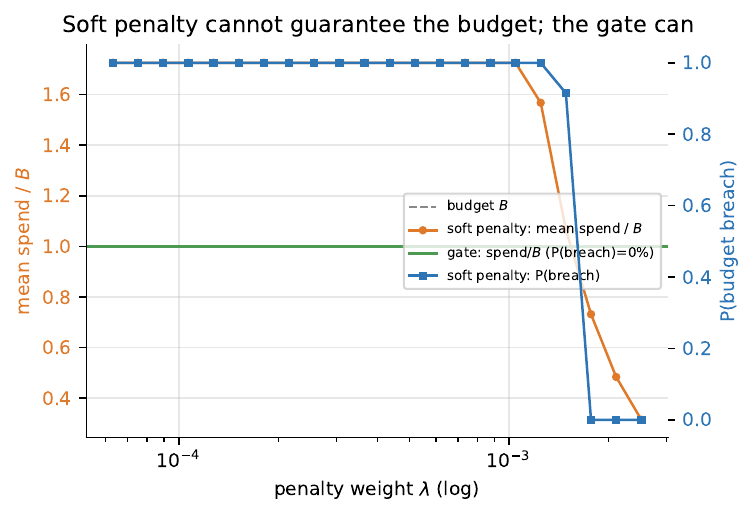}
\caption{A soft Lagrangian penalty cannot guarantee the budget: tuned to match $B$ in expectation it breaches on $91.5\%$ of seeds (blue), while the architectural gate (green) never breaches and fills $99.9\%$ of $B$.}
\label{fig:penalty}
\end{figure}

\section{Threat Model and Adversarial Robustness}\label{sec:threat}

A runtime gate invites the question: what does an attacker who knows the gate do? We give a partial, honest answer with a toy study against the same gate code path the benchmark exercises.

\subsection{Attacker model}
The attacker is a prompt author with white-box knowledge of the estimator, the scope cap, and the budget, who observes gate decisions (and, via timing channels, possibly residuals). The attacker cannot modify \texttt{src/} or mint tokens from nothing; the cost is always realized by the model provider. This is a \emph{cost-side} adversary, distinct from the safety-side prompt-injection threat studied by \cite{carlini2024,greshake2023}: the goal is not to make the agent misbehave but to make it \emph{overspend} while passing the gate.

\subsection{Three attack classes}
We construct three attacks ($20$ seeds, $200$ instances each, paired-bootstrap $95\%$ CIs). \textbf{Continuation inflation}: a prompt whose realized completion is $8\times$ the benign law (``continue indefinitely'' semantics). \textbf{Scope-cap-aware padding}: a prompt sized to exactly $\text{cap}-1$ tokens, maximizing admitted work per call. \textbf{Model-substitution gaming}: a prompt that declares the efficient model while the cost is realized at frontier rates (a misreported model id).

\subsection{Results}
\begin{table}[H]
\centering\small
\begin{tabular}{@{}lcccl@{}}
\toprule
Attack & Admission & Over-budget & Realized/declared & Gate failure mode \\
\midrule
Continuation inflation & $100\%$ & $100\%$ & $3.8\times$ & under-estimates \\
Scope-cap-aware padding & $100\%$ & $5\%$ & $0.95\times$ & over-admits \\
Model-substitution gaming & $100\%$ & $100\%$ & $5.0\times$ & under-estimates \\
\bottomrule
\end{tabular}
\caption{Adversarial study ($20$ seeds; CIs in \texttt{figure\_stats.json}). ``Over-budget'' is the rate at which the realized cost exceeds the gate's admitted $(1-\delta)$ bound (USD for the substitution attack).}
\label{tab:adversarial}
\end{table}

\subsection{What survives}
Two honest negatives. First, \textbf{scope-cap-aware padding defeats the gate by staying inside its admission contract}: it pads to just under the cap and extracts maximum legitimate work, so the gate admits it ($100\%$) with no over-budget event ($5\%$, at the $\delta=0.05$ noise floor) and realized cost \emph{below} the bound ($0.95\times$). This is a fundamental limitation of bounded-prompt-only governance and is exactly what the Phase-2 trajectory estimator $\hat F_\theta(\pi,x)$ --- which reasons about the whole plan, not one padded step --- is meant to address. The gate alone is not sufficient here, and we do not claim otherwise.

Second, continuation inflation and model substitution both defeat the \emph{forecast} (realized cost $3.8\times$ and $5.0\times$ the admitted bound), but they are caught \emph{post hoc} by the Post-Action Auditor, which logs predicted-vs-actual and feeds the discrepancy back to the estimator and the Escalation Router. The architectural response to a forecast-defeating attack is audit-then-revoke at the Auditor, not admission-time rejection at the gate; this is the cost-domain instance of SARC's predict--act--log--retrain loop. The gate bounds expected cost; it does not bound an adversary who lies about the future, and the four-site architecture --- not the gate in isolation --- is what makes the residual detectable.

\section{Limitations, Threats to Validity, and Future Work}\label{sec:limits}

\paragraph{Threats to validity.}
\emph{Synthetic headline workload.} The ablation, binding-budget, and sensitivity results (\S\ref{sec:eval},~\S\ref{sec:binding},~\S\ref{sec:sensitivity}) use a synthetic IBP pipeline with stipulated Gaussian noise; the State-Snowball and gate mechanics are real code, but the cost distribution is constructed. We mitigate this with the calibration study of \S\ref{sec:realtrace} and the end-to-end real-arrival ablation of \S\ref{sec:realarrival}; a residual gap remains in that BurstGPT is single-operator Azure traffic, so cross-operator traces (Mooncake, Alibaba) are listed as future validation. \emph{Marginal coverage.} Theorem~\ref{thm:safety} is marginal and assumes exchangeability (Remark~\ref{rem:marginal}); Theorem~\ref{thm:anytime} additionally assumes sub-Gaussian increments. Workload drift violates exchangeability; \S\ref{sec:realtrace} shows ACI restores coverage, and as of v0.3.0 both split-conformal and ACI ship in the runtime gate (\S\ref{sec:runtimeconformal}), though conditional (not merely marginal) coverage remains open. \emph{Carbon proxy.} $\Carbon(\tau)$ is only as faithful as $\kappa(\rho,t)$, which varies in availability and granularity across regions. The \S\ref{sec:realgrid} real-grid study uses a $24$-hour window per zone (ElectricityMaps free-tier constraint); a full year would expose seasonal variation in renewable share (e.g.\ CAISO winter low-solar) but is not expected to alter the grid-invariance result. \emph{Energy model.} The per-token energy $\mathrm{energy\_kwh}(m,\cdot)$ is a stipulated \emph{linear} coefficient (the shipped default is $3{\times}10^{-7}$~kWh/token, of order $1$~J/token --- the same order of magnitude as benchmarked GPU inference energy for large models~\cite{samsi2023,luccioni2024}). Measured inference energy is not linear in tokens: it varies with batch size, hardware, and utilization, and grows \emph{super-linearly} with context length because attention scales quadratically in sequence length~\cite{samsi2023}. The bias has a definite sign here: because the State Snowball makes context grow with loop depth, a linear proxy \emph{under-counts} the marginal energy --- and hence carbon --- of the deepest, most expensive steps, so the carbon savings we report from bounding loop depth are, if anything, conservative. A measured energy table per model/hardware (rather than one coefficient) is the faithful fix and fits the existing \texttt{CostModel} interface without changing it. \emph{Operator readiness.} A single-process \texttt{threading.Lock} \texttt{Budget} is authoritative for one replica; an \emph{experimental} Redis backend (one atomic Lua script per reserve/commit/release, with TTL reclamation of crashed-client reservations) provides a shared transactional counter for multi-replica deployments behind a load balancer, atomic against a single Redis but with no cross-region reconciliation and no fair-share reservations yet (Phase 2). Production deployments needing a durable ledger should await the Postgres backend.

\paragraph{Negative and null results.} The gate produces no token savings in the headline workload (\S\ref{sec:eval}); adding routing yields \emph{zero} marginal token reduction (it trades models, saving USD/carbon only); the declared latency-headroom field $\Delta_{\mathrm{lat}}$ is not enforced in Phase~1; and on real chat traffic the cumulative-prompt curvature is \emph{negative} (\S\ref{sec:realtrace}), so the $\Theta(n^2)$ snowball is specific to naive orchestration, not universal. We report these rather than fold them into a single ``governance helps'' number.

\paragraph{Future work.} The leading item is to \textbf{promote the split-conformal upper bound of \S\ref{sec:conformal} and the anytime-valid trajectory bound of Theorem~\ref{thm:anytime} from paper-side analyses into the runtime gate, with adaptive conformal inference~\cite{gibbs2021} under workload drift} (\S\ref{sec:realtrace} shows why this matters). Beyond that: a \emph{multi-step} agent trace (e.g.\ SWE-bench / OpenHands trajectories) to exercise the breaker and State-Snowball dynamics that the single-step BurstGPT trace does not, and cross-operator traces (Mooncake, Alibaba) for the carbon and arrival-distribution generalization \S\ref{sec:limits} flags; the Phase-2 full-trajectory estimator $\hat F_\theta(\pi,x)$ for plan-level rejection; time-of-day carbon-aware routing, which \S\ref{sec:realgrid}'s CAISO diurnal swing ($2.7\times$) shows is exploitable but the current router ignores; a multi-tenant distributed \texttt{Budget} with fair-share reservations; latency-headroom enforcement; and a production KAOS deployment of the gate as a sidecar. The single outstanding \emph{empirical} action is the live governed-agent study (two arms --- ungoverned vs full stack --- over $50$ tasks on the Anthropic API): its harness ships at \texttt{paper/scripts/run\_live\_study.py} with an in-script USD ceiling and a probe-checkpoint spend sign-off, and is unit-tested offline against a mock transport, but the funded live run is deferred (it is the one result this paper does not yet report). These are roadmap directions, not claims.

\section{Conclusion}\label{sec:conclusion}

Green SARC applies a correctness-governance architecture to the economics and ecology of inference, and develops its own theory with reproducible evidence. The State-Snowball theorem explains why unconstrained agents fail financially, and its closed form is confirmed exactly in the data. The predictive Pre-Action Gate generalizes the static accounting gate into a calibrated forecaster of which the rule is the zero-information limit, with a distribution-free budget-safety guarantee. And the soft-penalty comparison shows the guarantee is not free for the taking by reward shaping --- it requires the architectural placement. The structural claim is that correctness, cost, and sustainability are instances of one problem: the runtime enforcement of declared constraints, where the only thing that changes between them is the predicate.

\appendix
\section{Benchmark configuration}\label{app:config}
IBP defaults: $400$ SKUs, depth $10$, base prompt $s_0=200$, per-step increment $p=120$, scope cap $360$, \texttt{max\_tokens} $4000$; completion $\sim 60 + 0.45\cdot prompt + \mathcal N(0,25)$; runaway fraction $0.05$ at $3\times$ depth; breaker $n_{\max}=\lceil 1.5\,\text{depth}\rceil$; $50\%$ of SKUs routed to the small model under \texttt{+route}; $\delta=0.1$. Two model profiles (frontier/efficient) with distinct USD and energy rates; carbon under both fixed ($350$\,gCO\textsubscript{2}e/kWh) and a daily time-varying intensity curve.

\section{Estimator}\label{app:estimator}
Per $(\text{kind},\text{model})$ key, an online least-squares fit of completion on prompt tokens via running sums (Welford-style)~\cite{welford1962}, predicting $\hat c = prompt + \widehat{completion}$ (completion clamped to $[0,\texttt{max\_tokens}]$) with residual std $\sigma$ supplied to the gate; below \texttt{min\_samples} it defers to the zero-information cold-start forecast.

\section{Conformal calibration and the anytime-valid bound}\label{app:conformal}
\emph{Split conformal (Theorem~\ref{thm:safety}).} Split the learned forecasts into calibration/test halves; one-sided scores $R_i=c_i-\hat c_i$; $q_{1-\delta}=$ the $\lceil (m+1)(1-\delta)\rceil$-th order statistic; report test coverage $\frac1{|\text{test}|}\sum \mathbf 1[c\le \hat c + q_{1-\delta}]$ against nominal $1-\delta$.

\emph{Anytime-valid bound (Theorem~\ref{thm:anytime}).} With centered residuals $\xi_i=r_i-\mu$ conditionally $\hat\sigma$-sub-Gaussian, $M_k^\lambda=\exp(\lambda\sum_{i\le k}\xi_i-\tfrac12\lambda^2\hat\sigma^2 k)$ is a non-negative supermartingale for each $\lambda$ (the increment's conditional MGF is dominated by $e^{\lambda^2\hat\sigma^2/2}$). The mixture $\bar M_k=\int M_k^\lambda\,\phi_\eta(\lambda)\,d\lambda$ over a $N(0,\eta^2)$ prior is again a non-negative supermartingale with $\bar M_0=1$; the Gaussian integral evaluates in closed form to $\bar M_k=\big(1+\eta^2\hat\sigma^2 k\big)^{-1/2}\exp\!\big(\tfrac{\eta^2 (\sum_{i\le k}\xi_i)^2}{2(1+\eta^2\hat\sigma^2 k)}\big)$. Ville's inequality~\cite{ramdas2023} gives $\Pr[\exists k:\bar M_k\ge1/\delta]\le\delta$; solving $\bar M_k\ge1/\delta$ for $\sum_{i\le k}\xi_i$ yields the time-uniform boundary $\sum_{i\le k}\xi_i\le\sqrt{(1+\eta^2\hat\sigma^2 k)\,\tfrac{2}{\eta^2}\log\!\frac{\sqrt{1+\eta^2\hat\sigma^2 k}}{\delta}}=O(\hat\sigma\sqrt{k\log(1/\delta)})$, which is the standard sub-Gaussian confidence sequence~\cite{howard2021}. Optional stopping extends the bound from fixed $k$ to any stopping time $\tau$.

\section{Binding-budget experiment}\label{app:binding}
$\phi\in\{0.25,0.5,0.75,1,1.5,2\}$, $B=\phi\cdot\mathbb{E}[\text{baseline tokens}]$ with $\mathbb{E}[\text{baseline}]$ the mean full-snowball cost over $20$ seeds; $\delta=0.05$; the estimator warm-started on a $2000$-step independent stream. The soft-penalty frontier sweeps $\lambda$ at reference budget $0.5\times\mathbb{E}[\text{baseline}]$. Script: \texttt{paper/scripts/run\_binding\_budget.py}.

\section{Real-trace replay}\label{app:realtrace}
Dataset \texttt{anon8231489123/ShareGPT\_Vicuna\_unfiltered} (Hugging Face, permissive research license), streamed; up to $12{,}000$ conversations / $50{,}000$ assistant-turn pairs, tokenized with \texttt{tiktoken cl100k\_base}, capped to an $8\text{k}/4\text{k}$ context window; token counts only, no LLM calls. Split conformal vs.\ Gaussian-$\sigma$ coverage on a $50/50$ \emph{conversation-level} split (whole conversations assigned to calibration or test, never split across turns, to preserve exchangeability); shift experiment trains on short-conversation residuals and deploys on long --- regimes also assigned by whole conversation (each conversation classified by its maximum depth) --- comparing fixed-quantile vs.\ ACI ($\gamma=0.02$). The extracted table is cached to \texttt{paper/data/sharegpt\_subset.parquet} (git-ignored); committed JSONs are the provenance. Script: \texttt{paper/scripts/run\_realtrace\_replay.py}.

\section{Reproduction}\label{app:repro}
\texttt{make paper-data} regenerates every committed JSON asset (ablation, learning curve, binding-budget sweep, real-trace calibration and shift); \texttt{make paper-figures} regenerates the eleven figures and \texttt{figure\_stats.json}; \texttt{paper/scripts/check\_stats.py} verifies that every statistic in the text resolves to \texttt{figure\_stats.json}; \texttt{make paper} compiles this PDF (CI: \texttt{.github/workflows/paper.yml}). The benchmark's reference run is checked under \texttt{make verify}.

\medskip
\noindent\footnotesize\emph{Note on proofs.} All proofs are self-contained and elementary. Theorem~\ref{thm:safety} is a direct specialization of the standard inductive-conformal quantile lemma~\cite{vovk2005,angelopoulos2023}; Theorem~\ref{thm:anytime} is a standard anytime-valid argument via Ville's inequality~\cite{howard2021,ramdas2023}.

\end{document}